\definecolor{red}{RGB}{255,0,0}
\definecolor{blue}{RGB}{0,0,255}
\definecolor{green}{RGB}{0,255,0}
\newenvironment{proof}{\noindent{\bf Proof:}}{$\square$ \vspace{3mm}}
\newcommand {\abs}[1]  {\left\vert#1\right\vert}
\newcommand {\set}[1]  {\left\{#1\right\}}
\newcommand {\defined} {\stackrel{def} {=}}
\newcommand {\bigoh}   {{\cal O}}
\newcommand {\runningtitle}[1] {\vspace{0.5ex}\noindent{\textbf{\boldmath #1:}}}
\newcommand{\ignore}[1] {}
\newcommand {\commentfig}[1] {#1}
\renewcommand{\commentfig}[1] {}
\newcommand{\LineComment}[1]{\State~\(\triangleright\)~#1}
\newcommand{\Yield}[1]{\textbf{\textrm{yield}}~#1}
\newcommand{\dir}[1]{\overrightarrow{#1}}
\newtheorem{theorem} {Theorem}
\newtheorem{lemma} {Lemma}
\newtheorem{definition}  {Definition} 
\newtheorem{fact} {Fact}
\newtheorem{corollary}  {Corollary}
\newtheorem{observation}  {Observation} 
\newcommand{\TT}  {{\cal T}}
\DeclareMathOperator{\hs}{hs}
\DeclareMathOperator{\bcw}{bcw}
\begin{document}

\title{Generation of weighted trees, block trees and block graphs \thanks{The first and the second author acknowledge the support of the The Scientific and Technological Research Council of Turkey T\"{U}B\.ITAK Grant number 122M452.}
}
\author[1]{T{\i}naz Ekim}
\author[2]{Mordechai Shalom}
\author[3]{Mehmet Aziz Yirik}

\affil[1]{
Department of Industrial Engineering, Bogazici University, Istanbul, Turkey
\footnote{tinaz.ekim@boun.edu.tr}
}
\affil[2]{
Peres Academic Center, Rehovot, Israel
\footnote{cmshalom@gmail.com}
}
\affil[3]{
Department of Mathematics and Computer Science, The University of Southern Denmark, Odense, Denmark
\footnote{mehmetazizyirik@gmail.com}
}

\maketitle 

\begin{abstract}
We present a general framework to generate trees every vertex of which has a non-negative weight and a color. 
The colors are used to impose certain restrictions on the weight and colors of other vertices. We first extend the enumeration algorithms of unweighted trees given in \cite{Wilf78, Wilf81} to generate weighted trees that allow zero weight. 
We avoid isomorphisms by generalizing the concept of centroids to weighted trees and then using the so-called centroid-rooted canonical weighted
trees. 
We provide a time complexity analysis of unranking algorithms and also show that the output delay complexity of enumeration is linear.
The framework can be used to generate graph classes taking advantage of their tree-based decompositions/representations. 
We demonstrate our framework by generating weighted block trees which are in one-to-one correspondence with connected block graphs. 
All connected block graphs up to 19 vertices are publicly available at \cite{HoGBlock}.

\textbf{Keywords:}
Enumeration, unranking, generation uniformly at random, graph classes, block tree.
\end{abstract}

\section{Introduction}\label{sec:intro}
\runningtitle{Background}
Graphs provide a very useful tool to model many important real-life problems such as drug design, DNA sequencing, temporal reasoning in artificial intelligence, etc. \cite{Golumbic:2004:AGT:984029}. In each application, the related graph model admits local structural properties that are embodied by various graph classes. It is therefore important to solve optimization problems in graph classes defined by special structures.  Accordingly, many algorithms have been developed for various optimization problems in these classes. 
Yet, no random graph generators for these graph classes have been provided in the literature to measure the performance of exact/approximation/parameterized algorithms apart from a few exceptions.
This especially holds for the average case performance, since most of the theoretical analyzes focus on the worst case performance of the algorithms. 
This shortcoming was even mentioned as an important open task at a Dagstuhl Seminar \cite{bodlaender_et_al:DR:2014:4544}. 

An important use-case of graph generation with specified restrictions is in chemical graph theory, such as drug discovery and natural product studies \cite{yirik2021chemical}. To elucidate the structure of unknown molecules, chemists need the entire chemical space to be constructed, in other words, the isomorphism-free graph class. 
Constructing the entire graph class can be a very time-consuming task. 
Therefore, chemists aim to get a grasp of a subset of the class as the first step towards conducting in-depth analyses. 
One way to construct such an unbiased subset is the uniformly at random generation of the graph classes \cite{kerber1990cataloging}.   

It is well-known that various types of trees are used to represent structures of graphs. 
Clique-trees, block-trees, tree decompositions, clique-width decompositions, decompositions by clique-separators are some of the well-known such representations. 
Whenever a one-to-one mapping between graphs of a certain family and their tree representations can be established, the generation of the representative trees uniformly at random provides us with a tool to generate the graphs in the family uniformly at random as well. 
Motivated by the above examples, in this study, we present a generic framework to enumerate trees with some color scheme to express additional properties and demonstrate our approach to generate block graphs (with a given number of vertices).

Enumeration of rooted trees is a well-studied problem.
We will mention here a few works that are closely related to our work.
The book \cite{Wilf78} contains enumeration algorithms for numerous sets of combinatorial objects and a general framework to obtain such algorithms.
It also provides FORTRAN implementations of the algorithms.
As such, these algorithms are iterative (as opposed to recursive) since FORTRAN, at that time, did not have a direct support for recursive subroutines.
Among others, the book presents enumeration algorithms for (unlabeled, unordered) rooted trees.
Enumeration algorithms for free (i.e., unrooted) trees are presented in the paper \cite{Wilf81} by providing a one-to-one-mapping between free trees and sets of at most two rooted trees.  

The work \cite{L96} presents recursive enumeration algorithms for rooted and free trees with Pascal implementations.
It extends the results to trees with degree restrictions.
Being recursive, their implementation is, by far, simpler.
They show that the enumeration algorithm is CAT, i.e., it runs in constant amortized time.
Namely, the time needed to enumerate all the trees is proportional to the number of trees. 
In \cite{kelsen}, the author presents alternative recursive ranking and unranking algorithms for binary trees of bounded height and for height-balanced trees.

\newcommand{\bgg}{\textsc{BlockGraphGen}}

\runningtitle{Our Contribution}
In this work, we present a general framework to enumerate and unrank trees that are weighted and colored.
These are trees every vertex of which has a non-negative weight and a color. 
The colors may be used to impose certain restrictions on the weight and colors of other vertices.
We demonstrate this framework by generating block graphs and call this algorithm {\bgg}.
For this purpose, we assign colors and weights to the vertices of the block tree of a hypothetical graph on $n$ vertices and we enumerate/unrank all such trees.
Since block trees are in one-to-one correspondence with block graphs, the enumeration/unranking of block graphs follows. All non-isomorphic connected block graphs with up to 19 vertices are downloadable at \cite{HoGBlock}, the House of Graphs searchable database of "interesting graphs" \cite{HoG}.

\runningtitle{Organization of the paper}
In Section \ref{sec:prelim}, we start with a discussion on enumeration problems and introduce the terminology related to trees. 
We continue with our implementation of the algorithms in \cite{Wilf78} and \cite{Wilf81} in Section \ref{sec:Trees}. 
We present them as a set of generators involving indirect recursion. 
Taking advantage of features available today in many programming languages, our pseudo code is very simple.
Moreover, it is almost identical to its Python implementation \cite{PythonImpl}. We show that the enumeration of rooted trees and forests can be done in linear output delay time (in Theorem \ref{thm:outputdelay}) and their unranking takes $\bigoh(n^3)$ time (Theorem \ref{thm:UnrankingComplexity}). 

In Section \ref{sec:WeightedTrees}, we extend the techniques to weighted trees. For this purpose, we first extend the well-known structural properties of the centroids to the case of weighted trees that allow for zero-weight vertices.

In Section \ref{sec:BlockTrees} we represent block trees as vertex-colored weighted trees with specific restrictions. 
Then, we show that there is a one-to-one correspondence between block trees and block graphs (Theorem \ref{thm:block}). 
This enables us to enumerate block graphs using the generation algorithm {\bgg}.

In Section \ref{sec:framework}, we present a framework that generalizes the algorithms developed in the previous sections
and give references to a Python implementation of it.

Lastly, in Section \ref{sec:conclusion}, we discuss possible extensions of our work, the ingredients needed to adapt our approach to enumeration problems in other graph classes and point out promising research directions.

\section{Preliminaries}\label{sec:prelim}
To develop algorithmic solutions in networked systems, it is important to test algorithms on instances that are generated uniformly at random (u.a.r. for short) so that the experimental results are not biased. 
Two important issues make generating structured graphs u.a.r. an extremely challenging question. 
i) Forcing structural properties requires interventions to the generation process that harms randomness. 
The remedy depends very much on the required structure and how well this graph structure is captured and exploited for a generation. 
ii) To avoid the generation of the same graphs repetitively, one should perform a high number of isomorphism checks efficiently (which is a computationally very hard task \cite{iso20}) or avoid isomorphisms using powerful structural results. Generating (unlabeled) graphs u.a.r. means that, up to isomorphism, every graph in the target graph class is produced with equal probability. 

As pointed out in \cite{UEHARA2005479}, counting, generating u.a.r., and enumerating unlabeled graphs in a special class are closely related to each other.
Yet, these problems are not equivalent \cite{HararyPalmer73, Wilf78}, and among them, random generation seems to be the less investigated. 
\emph{Counting} is the task of finding the cardinality of a set $S$ of combinatorial objects. 
\emph{Enumeration} is the task of constructing all the objects in such a set $S$ without repetition, in some order $s_0, s_1, \ldots$.
The task of \emph{ranking} consists of determining the index of a given object $s \in S$ in the enumeration.
\emph{Unranking} (a.k.a) \emph{random access} is the opposite task, i.e., the task of finding the object $s_i \in S$ corresponding to a given non-negative integer $i$ (without enumerating $S$).
We will refer to this family of four problems as \emph{enumeration problems} through the text.

Clearly, the ability to count and unrank a set $S$ enables us to generate an element of $S$ u.a.r. as described in Algorithm \ref{alg:ChooseUAR}.

\alglanguage{pseudocode}

\begin{algorithm}
\caption{\textsc{Choose u.a.r.}}\label{alg:ChooseUAR}
\begin{algorithmic}[1]

\State $N \gets \abs{S}$  \Comment{Use the counting algorithm}
\State Choose an integer $i \in [0, N-1]$ u.a.r.
\State \Return the $i$-th element of $S$.  \Comment{Use the unranking algorithm}

\end{algorithmic}
\end{algorithm}

These building blocks can also be used to parallelize the task of enumerating objects of the set $S$ as suggested in Algorithm \ref{alg:ParallelEnumeration}.

\alglanguage{pseudocode}

\begin{algorithm}
\caption{\textsc{Parallel Enumeration}}\label{alg:ParallelEnumeration}
\begin{algorithmic}[1]
\Require{$P$ the number of processors available}
\State $N \gets \abs{S}$   \Comment{Use the counting algorithm}
\State $k \gets \lfloor \frac{N}{P} \rfloor$
\For {$i = 0$ to $P-1$ in parallel}
    \For{$j=i \cdot k$ to $\min \set{(i+1) \cdot k - 1, N-1}$}
        \State \Return the $j$-th element of $S$.  \Comment{Use the unranking algorithm}
    \EndFor
\EndFor
\end{algorithmic}
\end{algorithm}

We use standard terminology and notation for graphs (see, for instance \cite{D12}).
We denote by $[n]$ the set of positive integers not larger than $n$. 
Given a simple undirected graph $G$, we denote by $V(G)$ the set of vertices of $G$ and by $E(G)$ the set of the edges of $G$.
We use $\abs{G}$ as a shortcut for $\abs{V(G)}$.
We denote an edge between two vertices $u$ and $v$ as $uv$. In this work we consider only \emph{unlabeled}, \emph{unordered} graphs.
That is, the vertices of the graph do not have labels and the neighbors of a vertex do not have a particular order.

Throughout the paper, we consider both \emph{undirected} (a.k.a. \emph{unrooted}, a.k.a. \emph{free}) trees and 
\emph{directed} (a.k.a. \emph{rooted}) trees.
A directed tree is obtained from an undirected one by designating a vertex $r$ as its root, and directing the edges from the root outwards. 
Unless stated otherwise, a tree under consideration is undirected.

Let $T$ be a tree on $n$ vertices and $v$ a vertex of $T$.
The \emph{subtrees} of $v$ in $T$ are the subtrees of $T$ obtained by the removal of $v$, i.e., the connected components of $T - v$.
The \emph{heaviest subtree weight} of $v$ in $T$ denoted by $\hs_T(v)$ is the largest number of vertices in any of its subtrees in $T$.
When the tree $T$ is clear from the context, we refer the subtrees simply as the \emph{subtrees} of $v$ and we denote the heaviest subtree weight of $v$ as $\hs(v)$.
A \emph{centroid} of $T$ is a vertex with minimum heaviest subtree weight in $T$. The following fact regarding the centroids of a tree $T$ is well known \cite{Knuth68}.

\begin{fact}\label{fact:centroid}
Let $T$ be a tree on $n$ vertices.
\begin{itemize}
    \item $T$ has at most two centroids;
    \item if $T$ has two centroids then they are adjacent and the removal of the edge
joining them separates $T$ into two trees of equal sizes, 
i.e., the heaviest subtree weights of the centroids is $n/2$;
    \item if $v$ is the unique centroid of $T$ then $\hs(v) \leq (n - 1)/2$.
\end{itemize}
\end{fact}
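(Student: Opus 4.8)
The plan is to work with the function $\hs(\cdot)$ on the vertices of $T$ and study how it changes when we move from a vertex to one of its neighbors. First I would fix an arbitrary edge $uv$ of $T$, and let $T_u$ (resp.\ $T_v$) denote the component of $T - uv$ containing $u$ (resp.\ $v$), with $a := |T_u|$ and $b := |T_v|$, so $a + b = n$. The key observation is that the subtree of $v$ "through $u$" is exactly $T_u$, which has $a$ vertices, while every other subtree of $v$ is a subtree of $u$ as well; hence $\hs(v) \ge a$, and symmetrically $\hs(u) \ge b$. Moreover, if $\hs(v) > a$ then the heaviest subtree of $v$ lies on the $u$-side, and following that subtree one checks that $\hs(u) < \hs(v)$; by symmetry the same holds with $u$ and $v$ interchanged. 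This "monotonicity along an edge" is the workhorse of the whole argument, and formalizing the claim that the heaviest subtree of $v$ being on the $u$-side forces $\hs(u)<\hs(v)$ is the step I expect to require the most care (it amounts to noting that moving to $u$ replaces the big $a$-vertex subtree by the smaller set $T_v \setminus \{u\}$, while all subtrees of $v$ on the $u$-side become proper sub-subtrees).

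Granting this, I would prove the three bullets as follows. For the first and second bullets, suppose $u$ and $v$ are both centroids, so $\hs(u) = \hs(v) =: m$. They must be adjacent: if not, take the path between them and let $u', v'$ be the neighbors of $u, v$ on this path; applying the monotonicity observation to the edge incident to a centroid along this path would strictly decrease $\hs$, contradicting minimality (one has to argue that we cannot have $\hs(v) = a$ forced in a way that blocks this — but then the neighbor still does no worse, and iterating reaches a strictly better vertex, contradiction). Once $u,v$ are adjacent, from $\hs(u)\ge b$, $\hs(v)\ge a$ and $a+b=n$ we get $m \ge \max\{a,b\} \ge n/2$; and if $a \ne b$, say $a > b$, then $\hs(v) \ge a > n/2$, whereas a short computation shows some neighbor of the centroid on the larger side has heaviest subtree weight $< a$, contradicting that $v$ is a centroid. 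Hence $a = b = n/2$ and $\hs(u) = \hs(v) = n/2$. That $T$ has at most two centroids then follows because a third centroid $w$ would be adjacent to both $u$ and $v$, creating a cycle.

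For the third bullet, suppose $v$ is the unique centroid. Each subtree of $v$ has at most $\lfloor n/2 \rfloor$ vertices by the same $\hs(v) \ge a$, $\hs(u) \ge b$ inequality applied to the edge from $v$ into that subtree together with minimality of $\hs(v)$; more precisely, if some subtree $T_u$ of $v$ had $a \ge (n-1)/2 + 1$, i.e.\ $a \ge \lceil (n+1)/2 \rceil$, then on the edge $vu$ we would have $b = n - a \le a - 1 < a \le \hs(v)$, so by monotonicity $\hs(u) \le \hs(v)$, with equality only if $u$ is also a centroid — excluded by uniqueness — so $\hs(u) < \hs(v)$, contradicting that $v$ is the centroid. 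Therefore every subtree has at most $\lceil (n+1)/2 \rceil - 1 = \lfloor (n-1)/2 \rfloor \le (n-1)/2$ vertices, which is exactly $\hs(v) \le (n-1)/2$. I would present the edge-monotonicity as a small standalone claim at the start of the proof and then dispatch all three bullets from it; the only real subtlety is handling the boundary case $\hs(v) = a$ cleanly so that "a neighbor does no worse" is enough to derive the contradictions by iteration.
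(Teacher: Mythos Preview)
The paper does not prove Fact~\ref{fact:centroid}; it is stated with a citation to Knuth as well known. (The later Theorem~\ref{thm:CentroidsNonNegativeWeights} is a weighted generalization whose unit-weight case recovers the fact; that proof uses the same neighbor-comparison for the one-centroid case but handles multiple centroids by choosing two \emph{farthest} centroids and analyzing the path between them directly, rather than first isolating an edge-monotonicity lemma as you do.)

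Your overall strategy is the standard one and works, but the key lemma is misstated in several places. First, ``every other subtree of $v$ is a subtree of $u$ as well'' is false: the subtrees of $v$ other than $T_u$ are the components of $T_v-v$, and these are \emph{not} subtrees of $u$; together with $v$ they merge into the single subtree $T_v$ of $u$. Second, ``if $\hs(v)>a$ then the heaviest subtree of $v$ lies on the $u$-side'' is exactly backwards: if $\hs(v)>a$ the heaviest subtree is not $T_u$, so it lies inside $T_v-v$. What you actually need is the clean trichotomy: since every subtree of $u$ other than $T_v$ has at most $a-1$ vertices, $\hs(u)\le\max(b,a-1)$ while $\hs(v)\ge a$; hence $a>b\Rightarrow\hs(u)<\hs(v)$, the case $a<b$ is symmetric, and $a=b$ gives $\hs(u)=\hs(v)=n/2$. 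From this all three bullets follow as you outline. For the adjacency of two centroids, however, ``iterating reaches a strictly better vertex'' does not quite close the argument: apply the trichotomy to the edges at \emph{both} ends of the $u$--$v$ path; avoiding an immediate contradiction at each end forces the two outer components to each have at least $n/2$ vertices, yet they are disjoint and together omit at least one internal path vertex, which is impossible. Finally, ``$T_v\setminus\{u\}$'' is a slip since $u\notin T_v$.
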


\section{Unweighted Trees}\label{sec:Trees}
\newcommand{\FF}{{\mathcal{F}}}
\newcommand{\RT}{{\mathcal{RT}}}
\newcommand{\FT}{{\mathcal{T}}}
\newcommand{\CC}[2]{CC^{#1}_{#2}}

In this section, we present enumeration and unranking algorithms for unweighted trees.
The algorithms in Section \ref{sec:RootedTrees} and Section \ref{sec:FreeTrees} are very similar to the algorithms presented in \cite{Wilf78} and \cite{Wilf81}, respectively. 
Here, we provide an adaptation of these algorithms so that it will suit our needs in the upcoming sections. 
We present recursive algorithms for unweighted rooted trees and then algorithms for free trees that use these recursive algorithms as a subroutine. 
Technically, we enumerate all (unlabeled) trees using an adaptation of the formula counting unlabeled trees in \cite{Wilf81}. 
We avoid isomorphisms by exploiting the fact that every tree has either one or two adjacent centroids, as noted in Fact \ref{fact:centroid}. 
In this section, we prepare the technical basis for the development of the extensions of these algorithms to weighted trees in Section \ref{sec:WeightedTrees} and block trees in Section \ref{sec:BlockTrees}. 
Last but not least, we provide a time complexity analysis for the unranking algorithm 
and an output delay time complexity analysis for the enumeration algorithm.

\subsection{Rooted Trees}\label{sec:RootedTrees}
In this section, a tree is a rooted tree, and a forest is a digraph the connected components of which are rooted trees. 
For a forest $F$, we denote by $m(F)$ the maximum number of vertices in a tree of $F$ 
and by $\mu(F)$ the number of trees of $F$ having $m(F)$ vertices.
\bigbreak
Denote
\begin{itemize}
\item by $\RT(n)$ the set of trees on $n$ vertices,
\item by $\FF(n)$ the set of forests on $n$ vertices,
\item by $\FF(n,m)$ the set of forests $F \in \FF(n)$ with $m(F)=m$,
\item by $\FF^\leq (n,m) = \cup_{m' \in [m]} \FF(n,m')$ the set of forests $F \in \FF(n)$ with $m(F) \leq m$,
\item by $\FF(n,m,\mu)$ the set of forests of $F \in \FF(n,m)$ with $\mu(F)=\mu$, and
\item by $\FF^\leq (n,m,\mu) = \cup_{\mu' \in [\mu]} \FF(n,m,\mu')$ the set of forests of $F \in \FF(n,m)$ with $\mu(F) \leq \mu$.
\end{itemize}

\subsubsection{Enumeration}
\newcommand{\enumrtree}{\textsc{EnumerateRootedTrees}}
\newcommand{\enumforest}{\textsc{EnumerateForests}}

Under these definitions one can enumerate all the rooted trees on $n$ vertices as described by the pseudo code in Algorithm \ref{alg:RootedEnumeration}.
The pseudo code contains three generators that recursively invoke one another.
Namely,
$\enumrtree$(n) generates $\RT(n)$,  $\enumforest(n,m)$ that generates $\FF^\leq (n,m)$, and $\enumforest(n,m, \mu)$ that generates $\FF(n,m,\mu)$.
We adopted the yield statement from languages such as Python, C\#, PHP to return an object from the set being generated.
The return statement signals the termination of the set.

$\enumrtree$ repeatedly invokes $\enumforest$ to get one forest of $n-1$ nodes at-a-time and attaches to each forest a root to get a rooted tree on $n$ vertices.

$\enumforest(n,m)$ iterates over all possible largest tree sizes $m' \leq m$ and
number $\mu$ of largest trees.
For each such pair $m', \mu$ it yields all the forests yielded by $\enumforest(n,m', \mu)$

Finally, $\enumforest(n,m, \mu)$ iterates over all multisets with $\mu$ elements of the set $\RT(m)$ of trees on $m$ vertices ($highForest$). 
For each such multiset, it iterates over all forests on $n - \mu \cdot m$ vertices with largest tree size at of most $m-1$ ($lowForest$)
and it returns the union of the two ($highForest \cup lowForest$).

\newcommand{\alg}{\textsc{Enumerating Rooted Trees and Forests}}
\alglanguage{pseudocode}
\begin{algorithm}[H]
\caption{{\alg}}\label{alg:RootedEnumeration}
\begin{algorithmic}[1]

\Ensure{Enumerates the set $\RT(n)$ of rooted trees on $n$ vertices.}
\Function{\enumrtree}{$n$}
\For{$F \in$ \Call{\enumforest}{$n-1,n-1$}}
    \State \Yield{the tree obtained from $F$ by connecting all its roots to a new root}.
\EndFor
\EndFunction

\Statex
\Ensure{Enumerates the set $\FF^\leq(n,m)$ of rooted forests on $n$ vertices\\ 
~~~~~~~~with largest tree size of at most $m$.}
\Function{\enumforest}{$n,m$}
\If{$n=0$}
\State \Yield{the empty forest}
\State \Return
\EndIf
\If{$m=0$}
\State \Return
\EndIf
\For{$m' \in [m]$}
    \For{$\mu \in [\lfloor n/m' \rfloor]$}
        \For {$F \in$ \Call{\enumforest}{$n,m',\mu$}}
            \State \Yield{$F$}
        \EndFor
    \EndFor
\EndFor
\EndFunction

\Statex
\Ensure{Enumerates the set $\FF(n,m,\mu)$ of rooted forests on $n$ vertices\\ 
~~~~~~~~with $\mu$ largest trees, each having size $m$.}
\Function{\enumforest}{$n,m,\mu$}
    \For {$highForest \in $ multi-subsets with $\mu$ elements of $\RT(m)$}
        \For {$lowForest \in $ \Call{\enumforest}{$n-\mu \cdot m, \min(n - \mu \cdot m,m-1)$}}
            \State \Yield{$highForest \cup lowForest$}
        \EndFor
    \EndFor
\State \Return
\EndFunction

\end{algorithmic}
\end{algorithm}

Given the above definitions, the correctness of the functions $\enumrtree$ and 
$\enumforest$ are obvious.
We now give a time complexity analysis of this enumeration.
The \emph{delay} time of an algorithm that enumerates a set of objects is the maximum time elapsed between the emission of two consecutive objects (and the time elapsed until the first object is emitted).
We will consider the \emph{output delay} time complexity in which the delay time is measured as a function of the output size, i.e., the size of the emitted object.
Throughout the analysis, the \emph{size} of an object is the number of bits of its representation.
In particular, the size of a set is the sum of the sizes of its elements (as opposed to the number of its elements).
The size of a graph $(V,E)$ is $\bigoh(\abs{V}+\abs{E})$.
In particular, the size of a tree is $\bigoh(\abs{V}).$
We do not consider any possible compact representations that might exist for special cases
\footnote{For instance, a path on $n$ vertices can be represented by the integer $n$.}.

We start with the analysis of enumerating the multi-subsets of a set $A$ given by a function $f$ that enumerates its elements.
A pseudo code of an algorithm for such an enumeration is given in Algorithm \ref{alg:MultisetEnumeration}.

\newcommand{\enummultiset}{\textsc{EnumerateMultisets}}

\alglanguage{pseudocode}
\begin{algorithm}[H]
\caption{{\textsc{Enumerating Multisets}}}\label{alg:MultisetEnumeration}
\begin{algorithmic}[1]
\Require{$f$ enumerates a finite set $A=\set{a_1, \ldots}$ of objects.}
\Require{\Call{Bounded}{$f,i$} enumerates the first (at most) $i$ elements of $A$}
\Ensure{Enumerates all the multisets with exactly $k$ elements chosen from the first $upperBound$ elements of $A$.}
\Function{\enummultiset}{$f, k, upperBound$}
\If{$k=0$}
    \State \Yield{$\emptyset$}.
    \State \Return
\EndIf
\State $i \gets 0$
\For{$item \in $ \Call{Bounded}{$f, upperBound$}}
   \State $i \gets i + 1$
   \For{$multiset \in $ \Call{\enummultiset}{$f,k-1,i$}}
       \State \Yield{$multiset \uplus \set{item}$}
   \EndFor
\EndFor
\Return
\EndFunction
\end{algorithmic}
\end{algorithm}

\begin{observation}
The delay time of $\enummultiset$ to emit a multiset $S=\set{s_1, \ldots, s_k}$ is $\bigoh \left( \sum_{j=1}^k D_f(size(s_j)) \right)$ where $D_f(size(s))$ is the delay time of $f$ to emit the object $s$.
\end{observation}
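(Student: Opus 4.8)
The plan is to analyze the recursion in Algorithm~\ref{alg:MultisetEnumeration} by induction on $k$, tracking the time elapsed between two consecutive \textbf{yield} statements. The key observation is that a multiset $S=\set{s_1,\ldots,s_k}$ (say, listed in the order in which the recursion picks its elements, so that $s_k$ is the item chosen at the outermost level) is emitted by the outermost call pairing $item = s_k$ with a recursive emission of the multiset $\set{s_1,\ldots,s_{k-1}}$ by \Call{\enummultiset}{$f,k-1,i$}. Between two consecutive emissions of the full multiset, the work done is of three types: (i) advancing the inner recursive generator \Call{\enummultiset}{$f,k-1,\cdot$} to its next output, (ii) possibly advancing the outer generator \Call{Bounded}{$f,upperBound$} to its next $item$ (which happens only when the inner generator is exhausted), and (iii) the $\bigoh(1)$ bookkeeping plus the cost of forming $multiset \uplus \set{item}$, which is $\bigoh(size(S))$ and hence absorbed into the stated bound since $D_f(size(s_j)) = \Omega(size(s_j))$ for each $j$ (emitting an object takes at least the time to write it).

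First I would establish the base case $k=0$: the call yields $\emptyset$ immediately, delay $\bigoh(1)$, matching the empty sum. For the inductive step, I would argue that the delay to emit $S$ decomposes as the delay of the inner call to emit $\set{s_1,\ldots,s_{k-1}}$ — which by the induction hypothesis is $\bigoh\!\left(\sum_{j=1}^{k-1} D_f(size(s_j))\right)$ — plus the cost of obtaining the current $item = s_k$ from \Call{Bounded}{$f,upperBound$}. The subtle point is charging this last cost: when the inner generator finishes and we advance \Call{Bounded}{} by one step to get the next $item$, that single step costs $D_f(size(s_k))$; but we only do this once per value of $item$, and between getting $item$ and the \emph{first} emission involving it, the inner call \Call{\enummultiset}{$f,k-1,i$} does its own (inductively bounded) work. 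So the worst-case gap before emitting $S$ is at most $D_f(size(s_k))$ (to fetch $s_k$, in case the previous inner generator just ended) plus the inner delay $\bigoh\!\left(\sum_{j=1}^{k-1} D_f(size(s_j))\right)$, giving the claimed $\bigoh\!\left(\sum_{j=1}^{k} D_f(size(s_j))\right)$.

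The main obstacle is the accounting for the \Call{Bounded}{$f,i$} subroutine: each time it is restarted (at the next level down, with a fresh smaller bound $i$) it re-enumerates $A$ from the beginning, so naively the cost of reaching $s_j$ is $\sum_{\ell \le j} D_f(size(s_\ell))$, not $D_f(size(s_j))$. I would handle this by noting that the delay we care about is the gap between \emph{consecutive} outputs of the top-level generator, during which each generator on the recursion stack advances by exactly one step of its own underlying iteration (either \Call{Bounded}{} or a recursive \enummultiset{} call), not a full restart — a restart of \Call{Bounded}{$f,i$} is itself amortized against the previously emitted multisets that used the prior value of $item$ at that level. Making this "one step per level" claim precise is the crux; once it is granted, summing one step of cost $D_f(size(s_j))$ over the $k$ levels $j=1,\ldots,k$ yields the bound, and the $\bigoh(size(S))$ term for assembling $multiset \uplus \set{item}$ is subsumed as noted above.
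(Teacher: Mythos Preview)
Your approach is the same as the paper's: induction on $k$, charging the delay to at most one step of \textsc{Bounded} at the outermost level plus the inductive bound on the inner call. The paper's proof is three lines and treats the $\uplus$ operation as constant time (prepending one element), so the extra assumption $D_f(size(s))=\Omega(size(s))$ you introduce to absorb an $\bigoh(size(S))$ cost for $\uplus$ is unnecessary.

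The ``main obstacle'' you raise is not actually an obstacle, and the amortization you propose is misdirected. When \textsc{Bounded} restarts at some level $j$ (because level $j{+}1$ has just advanced), the very first element it produces is $a_1$, and $a_1$ \emph{is} the component $s_j$ of the next emitted multiset $S$. There is no need to ``reach'' any later element of $A$ within a single delay window, and hence no cumulative cost $\sum_{\ell\le j}D_f(size(a_\ell))$ to amortize away: that single fresh step costs exactly $D_f(size(s_j))$. Concretely, if level $j$ is the highest level whose \textsc{Bounded} advances between two consecutive emissions, then levels $1,\dots,j-1$ each restart and fetch $a_1$, level $j$ fetches its next element, and levels $j+1,\dots,k$ do no \textsc{Bounded} work at all; the fetched elements are precisely $s_1,\dots,s_j$ of the newly emitted $S$, so the total \textsc{Bounded} cost is $\sum_{j'=1}^{j}D_f(size(s_{j'}))\le\sum_{j'=1}^{k}D_f(size(s_{j'}))$. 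The only residual cost you and the paper both leave implicit is the $\bigoh(k)$ time for the exhausted inner generators to signal termination, which is absorbed since each $D_f$ is at least a constant.
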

\begin{proof}
By induction on $k$. In order to emit the multiset $S$, $\enummultiset$ makes one recursive invocation of itself and possibly one invocation to ${\textsc{Bounded}(f, upperBound)}$. 
We note that one invocation of ${\textsc{Bounded}(f, upperBound)}$ consists of one invocation of $f$ and an additional bound-check that takes constant time.
By the inductive assumption, the recursive invocation takes $\bigoh \left( \sum_{j=1}^{k-1} D_f(size(s_j)) \right)$ time and 
the invocation of ${\textsc{Bounded}(f, upperBound)}$ to get $item = s_k$ takes $D_f(size(s_k))$ time.
All the other operations including the $\uplus$ operation take constant time.
\end{proof}

Since $size(S) = \left( \sum_{j=1}^k size(s_j) \right)$, we have the following.

\begin{corollary}\label{coro:multisetEnumerationLinear}
If $D_f$ is linear then the output delay time complexity of $\enummultiset$ is linear. 
\end{corollary}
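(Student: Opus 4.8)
The plan is to read this off directly from the preceding Observation together with the displayed identity $size(S) = \sum_{j=1}^{k} size(s_j)$. First I would make the hypothesis explicit: "$D_f$ is linear" means there are constants $c_1, c_2 \ge 0$ with $D_f(x) \le c_1 x + c_2$ for every object size $x$. Plugging this into the bound of the Observation, the delay of $\enummultiset$ to emit a multiset $S = \set{s_1, \ldots, s_k}$ is
\[
\bigoh\!\left(\sum_{j=1}^{k} D_f(size(s_j))\right)
= \bigoh\!\left(\sum_{j=1}^{k}\bigl(c_1\cdot size(s_j) + c_2\bigr)\right)
= \bigoh\!\left(\sum_{j=1}^{k} size(s_j) + k\right).
\]

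The only point that needs a word of care is the additive term $k$, which arises from accumulating the constant $c_2$ once per level of recursion. I would dispose of it with the trivial observation that every emitted object occupies at least one bit, so $size(s_j) \ge 1$ and hence $k \le \sum_{j=1}^{k} size(s_j) = size(S)$. The same inequality also shows that the $\bigoh(k)$ bits of overhead needed to represent the multiset itself (delimiters and separators) are $\bigoh(size(S))$, so nothing is lost in identifying $size(S)$ with $\sum_{j} size(s_j)$. Finally, the base case $k=0$ emits $\emptyset$ in constant time, which is linear in the constant size of that output and therefore harmless.

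Combining these, the delay of $\enummultiset$ to emit any multiset $S$ is $\bigoh(size(S))$, i.e., linear in the size of the emitted object, which is exactly the claim. I do not anticipate any genuine obstacle here: the argument is essentially a one-line substitution into the Observation, and the only thing to be slightly careful about is ensuring that the additive constants hidden in the linearity of $D_f$ do not accumulate beyond $\bigoh(size(S))$ across the recursion.
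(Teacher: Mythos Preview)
Your proposal is correct and follows exactly the paper's approach: the paper simply states the identity $size(S) = \sum_{j=1}^{k} size(s_j)$ and declares the corollary immediate from the preceding Observation. You have merely made explicit the handling of the additive constants and the base case that the paper leaves implicit.
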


We are now ready to complete the delay time analysis of our enumeration algorithms.
\begin{theorem} \label{thm:outputdelay}
The output delay time complexity of  {\enumrtree} is linear.
\end{theorem}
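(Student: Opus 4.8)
The plan is to prove the statement by mutual induction over the three generators $\enumrtree$, $\enumforest(n,m)$, and $\enumforest(n,m,\mu)$, showing that each emits its objects with delay linear in the size of the object being emitted. The key observation is that the recursion depth of the call stack is controlled: when a tree or forest of size $s$ is being produced, the chain of active recursive calls has length $\bigoh(s)$, because each descent in the recursion either removes at least one vertex from the ``budget'' $n$ or decreases the parameter $m$ (or $\mu$), and $m \le n$, $\mu \le n$. So the total bookkeeping along one path from the outermost call to the point where an object is yielded is $\bigoh(s)$, and this is what we must charge to the delay.

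First I would set up the inductive claim precisely. Let $D_{\enumforest}(s)$ denote the delay time of $\enumforest(n,m)$ (resp.\ $\enumforest(n,m,\mu)$) to emit a forest $F$ of size $s = size(F)$, and similarly $D_{\enumrtree}(s)$ for a tree of size $s$. I claim all three are $\bigoh(s)$. For $\enumforest(n,m,\mu)$: to emit $highForest \cup lowForest$ it invokes $\enummultiset$ on $\RT(m)$ — whose elements are enumerated by a bounded version of $\enumrtree$ — to get $highForest$, then recursively calls $\enumforest$ to get $lowForest$. By Corollary \ref{coro:multisetEnumerationLinear}, since $D_{\enumrtree}$ is linear by the inductive hypothesis, the multiset enumeration has linear output delay, so emitting $highForest$ costs $\bigoh(size(highForest))$; the inner $\enumforest$ call costs $\bigoh(size(lowForest))$ by induction; the union operation and loop overhead are $\bigoh(s)$; altogether $\bigoh(s)$. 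For $\enumforest(n,m)$: it simply relays whatever $\enumforest(n,m',\mu)$ yields, with $\bigoh(1)$ extra work per yield plus at most $\bigoh(n \cdot n) = \bigoh(n^2)$ iterations of empty loop bodies spread across the whole enumeration — here I need to be careful that wasted iterations where the inner generator yields nothing are amortized, or bounded by $\bigoh(s)$ per emitted object. For $\enumrtree(n)$: it calls $\enumforest(n-1,n-1)$, gets a forest $F$, and attaches a new root, which is $\bigoh(size(F)) = \bigoh(s)$ work; so $D_{\enumrtree}(s) = \bigoh(s)$, closing the induction. The base cases ($n=0$ yielding the empty forest, $m=0$ returning, $k=0$) are all $\bigoh(1)$.

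The main obstacle, and the step I would spend the most care on, is handling the loops that iterate without producing output — the $\bigoh(n^2)$ combined passes over $(m',\mu)$ pairs in $\enumforest(n,m)$ and the iterations of the outer loops in $\enumforest(n,m,\mu)$ and $\enummultiset$ that terminate without a yield (e.g.\ when $n - \mu\cdot m$ forces the low forest to be empty in an incompatible way, or when a multiset prefix admits no completion). A naive bound would charge these empty iterations to the delay and break linearity. The resolution is to argue that between any two consecutive yields, the number of such ``silent'' loop steps is $\bigoh(n) = \bigoh(s)$: each generator, once entered, either yields an object within $\bigoh(1)$ of its own steps or advances a loop counter bounded by $n$, and the nesting depth is $\bigoh(\log n)$ or $\bigoh(n)$ depending on how one counts — in any case $\bigoh(s)$ since every active call corresponds to a distinct nonempty ``slice'' of the $n$ vertices or a decrement of a parameter at most $n$. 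I would make this rigorous by observing that each recursive call to $\enumforest$ with a nonzero first argument is ``responsible'' for a positive number of vertices of the eventual forest, so the call stack has $\bigoh(n)$ frames, and the silent work per frame between emissions is $\bigoh(1)$ amortized or $\bigoh(\text{loop bound at that frame}) = \bigoh(n)$ in the worst case, still giving $\bigoh(n) \subseteq \bigoh(s)$ overall, which is linear in the output size. Since a tree on $n$ vertices has size $\Theta(n)$, linear-in-$n$ delay is the same as linear output delay, and the theorem follows.
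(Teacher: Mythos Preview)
Your approach---mutual induction over the three generators, invoking Corollary~\ref{coro:multisetEnumerationLinear} for the multiset step---is exactly what the paper does. The paper's proof is terser: for $\enumforest(n,m)$ it simply asserts that the function ``needs only one invocation of $\enumforest(n,m',\mu)$'' to emit a forest $F$, and charges that invocation $\bigoh(size(F))$ by induction, without explicitly addressing the silent loop iterations you flag as the ``main obstacle''.

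Your concern about silent iterations is legitimate, but you overestimate its magnitude and then reach for a call-stack-depth argument that is not fully convincing as written. You claim the double loop over $(m',\mu)$ may contribute up to $\bigoh(n^2)$ empty iterations; in fact the silent work is far smaller. For every $m' \geq 2$ and every $\mu \in [\lfloor n/m' \rfloor]$, the set $\FF(n,m',\mu)$ is nonempty: take $\mu$ copies of any tree in $\RT(m')$ together with $n-\mu m'$ singletons, which is a legal forest since $m'-1 \geq 1$. Hence the inner call $\enumforest(n,m',\mu)$ yields at least once, and no iteration of the $(m',\mu)$ loop is wasted there. The only silent iterations occur at $m'=1$, $\mu < n$, where $\FF(n,1,\mu)$ is empty because the remaining $n-\mu>0$ vertices would have to form a forest with trees of size at most $0$. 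These $n-1$ silent iterations each cost $\bigoh(1)$ and contribute $\bigoh(n)=\bigoh(s)$ only to the delay before the very first yield. With this observation in hand, the amortization and stack-depth reasoning you sketch become unnecessary, and your inductive argument closes cleanly in the same way the paper's does.
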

\begin{proof}
We will show that each of the functions in Algorithm \ref{alg:RootedEnumeration} has linear output delay time complexity,
by induction on the size of the input. 

Clearly, all the functions return in constant time when invoked with $n=1$. 
We now analyze each function separately and assume that the claim inductively holds for all the invocations.
\begin{itemize}
    \item $\enumrtree (n)$ invokes $\enumforest (n-1, n-1)$ that returns a forest $F$ of $n-1$ vertices in $\bigoh(size(F))$ time, by the inductive assumption. 
It then adds a root and its incident edges to return a tree $T$.
Therefore, it takes time $\bigoh(size(T))$ to return a tree $T$.
    
    \item $\enumforest(n,m)$ needs only one invocation of $\enumforest(n,m',\mu)$ to return the forest $F$ returned by the invocation.
    By the inductive assumption, this takes $\bigoh(size(F))$ time.

    \item $\enumforest(n,m.\mu)$ makes one invocation of $\enumforest$ to get $lowForest$. 
    By the inductive assumption, this takes $\bigoh(size(lowForest))$ time. 
    It may possibly invoke $\enummultiset(\enumrtree)$ that takes also linear time, i.e., $\bigoh(size(highForest))$ by the inductive assumption and Corollary \ref{coro:multisetEnumerationLinear}.
    Then it combines the two forests in constant time for an overall time complexity of $\bigoh(size(highForest)) + \bigoh(size(lowForest))$ which is $\bigoh(size(F))$ where $F$ is the returned forest.
\end{itemize}

\end{proof}

\subsubsection{Unranking}
In this part, our goal is to present an algorithm that, given two non-negative integers $n,i$, returns the $i$-th tree in $\RT(n)$ (if such a tree exists).
For this purpose we need the sizes of the sets under consideration.
We now denote 
$rt_n \defined \abs{\RT(n)}$, 
$f_n \defined \abs{\FF(n)}$, 
$f_{n,m} \defined \abs{\FF(n, m)}$,
$f^\leq_{n,m} \defined \abs{\FF^{\leq}(n, m)}$,
$f_{n,m,\mu} \defined \abs{\FF(n, m, \mu)}$, and
$f^\leq_{n,m,\mu} \defined \abs{\FF^\leq(n, m, \mu)}$.
It follows immediately from the definitions that 
\begin{eqnarray}
rt_n & = & f_{n-1} \label{eqn:rec_rt}\\
f_{n,m} & = & \sum_{\mu=1}^{\lfloor n/m \rfloor} f_{n,m,\mu} \label{eqn:rec_f_nm}\\
f^\leq_{n,m} & = & \sum_{m'=1}^m f_{n,m'} \label{eqn:rec_fleq_nm}\\
f^\leq_{n,m,\mu} & = & \sum_{\mu'=1}^\mu f_{n,m,\mu'} \label{eqn:rec_fleq_nmmu}\\
f_n & = & f^\leq_{n,n} \label{eqn:rec_f_n}.
\end{eqnarray}

Finally,
\begin{equation}\label{eqn:rec_f_nmmu}
f_{n,m,\mu}  =  \CC{rt_m}{\mu} \cdot f^\leq_{n-\mu \cdot m, \min \set{n-\mu \cdot m, m-1}}
\end{equation}
where $\CC{i}{j}$ is the number multisets of $j$ elements of a set of $i$ elements, i.e. $\CC{i}{j} = {{i - 1 + j} \choose {j}}$.
Indeed, a forest $F \in \FF_{n,m,\mu}$ is uniquely determined as the union of $\mu$ trees on $m$ vertices and a forest $F'$ of $n - \mu \cdot m$ vertices with trees of at most $\min \set{n - \mu \cdot m, m-1}$ vertices each.

Once this values are computed, we can use the unranking algorithms the pseudo codes of which are given in Algorithm \ref{alg:UnrankingRootedTree}.
The indices of the forests and trees are zero-based and defined by the order that the objects would be returned by the enumeration algorithms (Algorithm \ref{alg:RootedEnumeration}) presented in the previous section. 

\newcommand{\funcrtree}{\textsc{FindRootedTree}}
\newcommand{\funcforest}{\textsc{FindForest}}

\funcrtree$(n,i)$ invokes {\funcforest} to get the $i$-th forest on $n$ vertices and attaches a root to it. 
In order to get the $i$-th forest $F$ on $n$ vertices, {\funcforest} first determines the size $m$ of a largest tree in $F$ (Line \ref{ln:FindForestFindM}), and the index $i'$ of $F$ within the set $\FF_{n,m}$ of these trees (Line \ref{ln:FindForestFindI}).
Then, it determines the number $\mu$ of the largest trees of $F$ (Line \ref{ln:FindForestFindMu})
and the index $i''$ of $F$ in $\FF_{n,m,\mu}$ (Line \ref{ln:FindForestFindIDPrime}).
Finally, with these numbers at hand, it invokes $\funcforest(n.m,\mu,i'')$ to find the forest $F$ within $\FF_{n,m,\mu}$ (Line \ref{ln:FindForestRecurse}).

To get the $i$-th forest $F$ of $\FF_{n,m,\mu}$, we partition $\FF_{n,m,\mu}$ into equivalence classes according to the multiset of the largest trees. 
The size of each equivalence class is $j=f^\leq_{n - \mu \cdot m, \min(n- \mu \cdot m,m-1)}$, i.e. as the number of forests on $n - \mu \cdot m$ vertices with largest tree size of at most $m-1$ (Line \ref{ln:FindForestFindJ}).
We then determine the index of the relevant equivalence class and the index of $F$ in its class (Lines \ref{ln:FindForestFindLFIndex} - \ref{ln:FindForestFindHFIndex}).
With these numbers at hand, it remains to find the multiset of trees that corresponds to the chosen equivalence class (Line \ref{ln:FindForestFindHighForest}), the forest consisting of the remaining trees (Line \ref{ln:FindForestFindLowForest}), and return their union (Line \ref{ln:FindForestReturn}).

\renewcommand{\alg}{\textsc{Unranking Rooted Trees and Forests}}

\alglanguage{pseudocode}
\begin{algorithm}[H]
\caption{{\alg}}\label{alg:UnrankingRootedTree}
\begin{algorithmic}[1]

\Require{$0 \leq i < rt_n$}
\Ensure{Return the $i$-th tree of $\RT(n)$}
\Function{\funcrtree}{$n,i$}
\State $F \gets $ \Call{\funcforest}{$n-1,i$}
\State \Return the tree obtained from $F$ by connecting all its roots to a new root.
\EndFunction
\Statex
\Require{$0 \leq i < f_n$}
\Ensure{Return the $i$-th forest of $\FF(n)$}
\Function{\funcforest}{$n,i$}
\If{$n=0$}
\State \Return the empty forest
\EndIf
\State $m \gets $ the smallest integer $m$ such that $f^\leq_{n,m} > i$ \label{ln:FindForestFindM}
\State $i' \gets i - f^\leq_{n,m-1}$ \label{ln:FindForestFindI}
\State $\mu \gets $ the smallest integer $\mu$ such that $f^\leq_{n,m,\mu} > i'$ \label{ln:FindForestFindMu}
\State $i'' \gets i' - f^\leq_{n,m,\mu-1}$ \label{ln:FindForestFindIDPrime}
\State \Return \Call{\funcforest}{$n,m,\mu,i''$} \label{ln:FindForestRecurse}
\EndFunction
\Statex
\Require{$0 \leq i < f_{n,m,\mu}$}
\Ensure{Return the $i$-th forest of $\FF(n,m,\mu)$}
\Function{\funcforest}{$n,m,\mu,i$}
\State $j \gets f^\leq_{n- \mu \cdot m, \min(n- \mu \cdot m,m-1)}$ \label{ln:FindForestFindJ}
\State $lowForestIndex \gets i \mod j$ \label{ln:FindForestFindLFIndex}
\State $highForestIndex \gets \lfloor i / j \rfloor$ \label{ln:FindForestFindHFIndex}
\LineComment{The choice of the index and the order of forests guarantee}
\LineComment{that the size of the largest tree in $lowForest$ is at most $m-1$}
\State $lowForest \gets$ \Call{\funcforest}{$n - \mu \cdot m,lowForestIndex$} \label{ln:FindForestFindLowForest}
\State $highForest \gets$ the $i$-th multiset in the multi-subsets with $\mu$ elements of $\RT(m)$  \label{ln:FindForestFindHighForest}
\State \Return $highForest \cup lowForest$ \label{ln:FindForestReturn}
\EndFunction

\end{algorithmic}
\end{algorithm}

We now analyze the time complexity of our algorithms.
For this purpose, in the following Lemma, we present a multiset unranking algorithm.

\begin{lemma}\label{lem:multisetComplexity}
Algorithm \ref{alg:MultisetUnranking} returns the $i$-th multiset among the multisets consisting of $k$ elements from the integers $0$ to $n-1$ in $\bigoh(k^2 \log n)$ time. 
\end{lemma}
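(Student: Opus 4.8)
The plan is to use the classical correspondence between size-$k$ multisets of $\{0,\dots,n-1\}$ and non-decreasing $k$-tuples, and to unrank recursively on $k$, peeling off one element per level. First I would describe, for the enumeration order fixed by Algorithm~\ref{alg:MultisetEnumeration}, how the $\CC{n}{k}=\binom{n+k-1}{k}$ multisets split into blocks: a multiset is produced by \enummultiset with outer element equal to the value $v$ only if $v$ is its maximum element, so for $t\in\{1,\dots,n\}$ block $t$ consists exactly of the size-$k$ multisets whose maximum element equals $t-1$, and it contains the $\CC{t}{k-1}=\binom{t+k-2}{k-1}$ size-$(k-1)$ multisets of $\{0,\dots,t-1\}$, listed in their own recursive order. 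The key identity is the hockey-stick sum $\sum_{t'=1}^{t}\CC{t'}{k-1}=\CC{t}{k}$, so the prefix sums of the block sizes are precisely $\CC{1}{k},\CC{2}{k},\dots$.

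Given this, the algorithm I would analyze is: if $k=0$ return $\emptyset$; otherwise binary-search over $t\in[1,n]$, using the (precomputed) values $\CC{t}{k}$, for the smallest $t$ with $\CC{t}{k}>i$, so that $i$ lies in block $t$; set $i'\gets i-\CC{t-1}{k}$ (with $\CC{0}{k}=0$); recursively compute the $i'$-th size-$(k-1)$ multiset of $\{0,\dots,t-1\}$; and insert the element $t-1$ into it. Correctness is an induction on $k$: the block decomposition shows that this $t$ and $i'$ are exactly the outer choice and residual index that Algorithm~\ref{alg:MultisetEnumeration} makes at the top level, and the inductive hypothesis matches the recursive call's output with the multiset enumerated in the inner loop; the degenerate cases ($k=0$, $i=0$ giving $t=1$ and the all-zeros multiset) check out directly.

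For the running time I would charge the recursion level by level. There are $k+1$ levels, since the parameter drops from $k$ to $0$. One level performs a single binary search over an interval of length at most $n$, hence $\bigoh(\log n)$ steps, each step a comparison of two non-negative integers smaller than $\CC{n}{k}=\binom{n+k-1}{k}<(n+k)^k$; such an integer has $\bigoh(k\log n)$ bits (using $k\le n$ as in our setting), i.e. $\bigoh(k)$ machine words, so a comparison, and likewise the subtraction $i-\CC{t-1}{k}$, costs $\bigoh(k)$; inserting $t-1$ into the sorted multiset also costs $\bigoh(k)$. Hence each level is $\bigoh(k\log n)$ and the total is $\bigoh(k^2\log n)$, as claimed. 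The table of values $\CC{t}{k'}$ for $t\le n$, $k'\le k$ is taken to be precomputed (it is built once via the Pascal-type recurrence $\CC{t}{k'}=\CC{t-1}{k'}+\CC{t}{k'-1}$, exactly as the corresponding count tables are precomputed for the tree unranking algorithm).

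I expect the main obstacle to be bookkeeping rather than depth: pinning down the exact nesting order of Algorithm~\ref{alg:MultisetEnumeration} (which coordinate is chosen at the top level, $0$- versus $1$-based indices, and the conventions $\CC{0}{k}=0$ and $k=0$) so that the block decomposition lines up with it precisely, and — on the complexity side — being careful that the bit-length of the binomials is $\bigoh(k\log n)$ so that each big-integer operation is $\bigoh(k)$ and the $\log n$ factor enters only through the binary search, yielding $\bigoh(k^2\log n)$ rather than an extra logarithmic factor.
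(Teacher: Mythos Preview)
Your proposal is correct and follows essentially the same approach as the paper: identify the largest element of the multiset via binary search using the prefix counts $\CC{\ell}{k}$, subtract to get the residual index, and recurse on $k-1$. The only cosmetic difference is that the paper computes each $\CC{\ell}{k}$ on the fly in $\bigoh(k)$ arithmetic operations (unit-cost model) rather than assuming a precomputed table and charging $\bigoh(k)$ for big-integer comparisons; both justifications yield the same $\bigoh(k\log n)$ per level and $\bigoh(k^2\log n)$ overall.
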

\begin{proof}
We represent a multiset of $k$ elements by a sequence of length $k$ of its elements in non-increasing order.
We order the multisets in the lexicographic order of their representative sequence. 
Consider the first multiset $S$ in the lexicographic order whose largest element is $\ell$.
We observe that a multiset $S'$ precedes $S$ in the lexicographic order if and only if $S \subseteq [0, \ell-1]$.
Therefore, the zero-based index of $S$ in the lexicographic order is the number of such multisets $S'$, i.e., $\CC{\ell}{k}$
\footnote{As a quick validation,  $\CC{1}{k}={{1-1+k} \choose {k}}=1$ is the index of the first multiset whose largest element is $1$. 
Indeed, there is exactly one set before it, which is the only set whose largest element is zero.}.
Using this observation, the algorithm determines the largest element $\ell$ of the sought set as the largest integer such that $\CC{\ell}{k} \leq i$ (Line \ref{ln:FindMultisetFindL}).
In order to determine the rest of the set, we recursively search for a multiset $\bar{S}$ of $k-1$ elements from the set $[0,\ell]$ (Line \ref{ln:FindMultisetRecurse}).
In order to make the recursive invocation, it remains to determine the index of $\bar{S}$ among the multisets of $k-1$ elements from $[0,\ell]$.
Since the index of the first such set is $\CC{\ell}{k}$, the index of $\bar{S}$ is $i - \CC{\ell}{k}$.

We now proceed with the time complexity.
Since $\CC{\ell}{k}$ can be computed in $\bigoh(k)$ time,
$\ell$ can be determined by a binary search on the $n$ possible values of $\ell$, i.e., in $\bigoh(k \log n)$ time.
Therefore,
\[
T(n,k) = \bigoh( k \log n + T(\ell+1,k-1)) \leq \bigoh(k \log n) + T(n,k-1)
\]
where $T(n,k)$ denotes the running time of finding a multiset with $k$ elements from $[0,n-1]$.
We conclude that $T(n,k)$ is $\bigoh(k^2 ~ \log n)$.
\end{proof}

\newcommand{\funcmultiset}{\textsc{FindMultiset}}

\alglanguage{pseudocode}
\begin{algorithm}[H]
\caption{{\textsc{Unranking Multisets}}}\label{alg:MultisetUnranking}
\begin{algorithmic}[1]
\Ensure{Returns the $i$-th multiset among the multisets consisting of $k$ elements from the integers $0$ to $n-1$}
\Function{\funcmultiset}{$n, k, i$}
\If{$n=1$ or $k=0$}
\State \Return the multiset consisting of $k$ zeros.
\EndIf
\State $\ell \gets$ the largest integer such that $\CC{\ell}{k} \leq i$ \label{ln:FindMultisetFindL}
\State $\bar{S} \gets \Call{\funcmultiset}{\ell+1,k-1,i - \CC{\ell}{k}}$ \label{ln:FindMultisetRecurse}
\State \Return $\set{\ell} \uplus \bar{S}$
\EndFunction
\end{algorithmic}
\end{algorithm}

In the following lemma we analyze the time and space complexity of precomputing all the values \eqref{eqn:rec_rt} thru \eqref{eqn:rec_f_nmmu} needed to run Algorithm \ref{alg:UnrankingRootedTree}.
\begin{lemma}\label{lem:complexity}
For every positive integer $N$, all the values $rt_n, f_n, f_{n,m}, f^\leq_{n,m}$, $f_{n,m,\mu}$, $f^\leq_{n,m,\mu}$ and $\CC{rt_m}{\mu}$ where $n \in [N]$ and $\mu \cdot m \leq n$ can be computed in $\bigoh(N^2 \log N)$ time and space.
\end{lemma}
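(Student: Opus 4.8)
The plan is to compute all the required values by a bottom-up dynamic program that follows exactly the recurrences \eqref{eqn:rec_rt} through \eqref{eqn:rec_f_nmmu}, ordered so that each value depends only on values with a strictly smaller number of vertices (or the same number of vertices but smaller parameters). First I would bound the number of values to be stored: for each $n \in [N]$ there are $O(n)$ values $f^\leq_{n,m}$ and $O(n)$ values $f_{n,m}$; for the triples $(n,m,\mu)$ with $\mu \cdot m \le n$ there are, for each fixed $n$, only $\sum_{m=1}^{n} \lfloor n/m \rfloor = O(n \log n)$ valid pairs $(m,\mu)$, and the same bound holds for $f^\leq_{n,m,\mu}$ and for $\CC{rt_m}{\mu}$. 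Summing over $n \in [N]$ gives $O(N^2 \log N)$ stored values in total, plus the $O(N)$ values $rt_n$ and $f_n$.

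Next I would argue that each individual value can be produced from already-computed ones in $O(\log N)$ time, so that the total time matches the space bound. The values $rt_n = f_{n-1}$ and $f_n = f^\leq_{n,n}$ are copies, hence $O(1)$ each. Each $f^\leq_{n,m}$ is obtained from $f^\leq_{n,m-1}$ by adding a single already-computed $f_{n,m}$ (a prefix sum), and likewise $f^\leq_{n,m,\mu}$ from $f^\leq_{n,m,\mu-1}$ by adding $f_{n,m,\mu}$; each such step is one big-integer addition. Computing $f_{n,m}$ via \eqref{eqn:rec_f_nm} is then just reading off $f^\leq_{n,m,\lfloor n/m\rfloor}$ once those are available. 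The only genuinely multiplicative step is \eqref{eqn:rec_f_nmmu}: $f_{n,m,\mu} = \CC{rt_m}{\mu} \cdot f^\leq_{n-\mu m,\min\{n-\mu m,m-1\}}$, which needs the binomial-coefficient value $\CC{rt_m}{\mu} = \binom{rt_m - 1 + \mu}{\mu}$ and one multiplication. I would maintain these binomials incrementally as well: for fixed $m$, $\CC{rt_m}{\mu}$ is obtained from $\CC{rt_m}{\mu-1}$ by multiplying by $(rt_m - 1 + \mu)$ and dividing by $\mu$, one multiplication and one division. The careful bookkeeping here is that the order of evaluation — increasing $n$, then for each $n$ increasing $m$, then increasing $\mu$ — guarantees every operand on the right-hand side is already in the table, since $n - \mu m < n$ whenever $\mu \ge 1$.

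The remaining point, and the one I expect to be the main obstacle in making the bound honest, is the cost of big-integer arithmetic: the counts $rt_n$ grow exponentially in $n$, so a single number has $\Theta(n)$ bits, and a schoolbook multiplication of two such numbers costs $\Theta(n^2)$, which would blow the claimed bound. I would resolve this exactly as is standard in this line of work — by measuring time in \emph{arithmetic operations} on these integers (the unit-cost / word-RAM-with-$O(\log)$-sized-words convention already used implicitly when Lemma \ref{lem:multisetComplexity} says $\CC{\ell}{k}$ is computed in $O(k)$ time), so that each addition, multiplication, or division of the precomputed quantities counts as $O(1)$, or alternatively by noting that with the "size = number of bits" convention one absorbs a $\operatorname{poly}(N)$ factor that the paper's $\bigoh(N^2 \log N)$ statement is implicitly taking per-operation. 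Under the arithmetic-operation count, the total is $O(N^2 \log N)$ operations and $O(N^2 \log N)$ stored integers, which is the claim. I would state this convention explicitly at the start of the proof so that the $\log N$ factors are attributed unambiguously (they come solely from the $O(n\log n)$ count of valid $(m,\mu)$ pairs, not from integer lengths).
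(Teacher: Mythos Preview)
Your approach is essentially the same as the paper's: count the number of stored values as $\sum_{n\le N}\sum_{m\le n}\lfloor n/m\rfloor = O(N^2\log N)$, observe that each value follows from already-tabulated ones by a constant number of arithmetic operations (with the binomials $\CC{rt_m}{\mu}$ updated incrementally from $\CC{rt_m}{\mu-1}$), and conclude that time is bounded by space. Your extra paragraph making the unit-cost arithmetic convention explicit is a welcome addition that the paper leaves implicit.

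One small slip to fix: you announce that ``each individual value can be produced from already-computed ones in $O(\log N)$ time,'' but then (correctly) itemize every step as $O(1)$ arithmetic operations. If each of the $O(N^2\log N)$ values really cost $O(\log N)$, the total would be $O(N^2\log^2 N)$; the claimed bound only follows because the per-value cost is $O(1)$, exactly as your itemization shows and as the paper argues. Just replace ``$O(\log N)$'' by ``$O(1)$'' in that sentence.
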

\begin{proof}
The required memory is dominated by the size of the $f_{n,m,\mu}$ values.
For every $n,m$ pair the required memory is $\bigoh(\frac{n}{m})$.
Therefore, the overall memory requirement is proportional to
\[
\sum_{n=1}^N \sum_{m=1}^{n} \frac{n}{m} = \sum_{n=1}^N n \sum_{m=1}^{n} \frac{1}{m} = \sum_{n=1}^N n \cdot \bigoh(\log n) = \bigoh(N^2 \log N)
\]

We now consider the evaluation of the right hand sides.
\eqref{eqn:rec_rt} and \eqref{eqn:rec_f_n} can clearly be computed in constant time.
Every partial sum, in particular \eqref{eqn:rec_fleq_nm} and \eqref{eqn:rec_fleq_nmmu} can be computed in constant time too.
We note that $\CC{rt_m}{1}=rt_m$ and for $\mu > 1$ $\CC{rt_m}{\mu}$ can be computed from $\CC{rt_m}{\mu-1}$ in constant time (one multiplication, one division and one addition).
Then \eqref{eqn:rec_f_nmmu} can be computed in constant time.
Therefore, the computation time of these values is bounded by the memory size. 
We remain with the computation of $\eqref{eqn:rec_f_nm}$. 
We observe that every $f_{n,m,\mu}$ value is accessed exactly once during the computation of these values.
Therefore, the computation time of $\eqref{eqn:rec_f_nm}$ is bounded by the memory size too.
\end{proof}

We are now ready to analyze the time complexity of the rooted tree unranking algorithms
which assume that the values mentioned in Lemma \ref{lem:complexity} are pre-computed.

\begin{theorem}\label{thm:UnrankingComplexity}
$\funcrtree(n,i)$ returns the $i$-th tree of $\RT(n)$ in $\bigoh(n^3)$ time.
\end{theorem}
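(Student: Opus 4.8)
The plan is to bound the total running time as (the number of times each of \funcrtree, the two-argument and four-argument versions of \funcforest, and \funcmultiset\ is invoked during the evaluation of $\funcrtree(n,i)$) times (the work done inside a single invocation, not counting the recursive sub-invocations), and then sum the products over the four function types. Throughout I would use that every integer the algorithm manipulates --- the index $i$, the precomputed quantities $rt_n,f_n,f_{n,m},f^{\leq}_{n,m},f_{n,m,\mu},f^{\leq}_{n,m,\mu}$, and the numbers $\CC{rt_m}{\mu}$ --- is at most $2^{\bigoh(n)}$ (for instance $rt_n,f_n\le 2^{\bigoh(n)}$ since the number of ordered rooted trees on $n$ nodes is less than $4^n$), hence fits in $\bigoh(n)$ bits; consequently addition and comparison of two such integers cost $\bigoh(n)$, and computing a quotient and remainder of two such integers costs $\bigoh(n^2)$ (schoolbook; under a unit-cost model it is cheaper, which only helps).

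First I would count the invocations. The crucial structural observation is that $\funcrtree(n,i)$ builds its output tree $T$ vertex by vertex: the top call builds the subtree of $T$ rooted at its root; it invokes \funcforest\ on $T$ minus that root, whose components are the subtrees rooted at the children of the root; this forest is split by the chain of \funcforest\ calls into its size-classes, and for each size-class \funcmultiset\ produces a multiset of indices which is turned into actual trees by recursive calls to \funcrtree. Hence exactly one call to \funcrtree\ is made per vertex of $T$, so there are exactly $n$ of them. Moreover, the two-argument and four-argument \funcforest\ calls (and the \funcmultiset\ calls) triggered by the \funcrtree\ call for a vertex $v$ form a single chain with one block per distinct subtree size among the children of $v$, plus one trivial terminating call on the empty forest; since the number of distinct child-subtree sizes of $v$ is at most the number of children of $v$, and $\sum_v(\text{number of children of }v)=n-1$, the total number of \funcforest\ invocations and of \funcmultiset\ invocations is $\bigoh(n)$.

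Next I would bound the work of a single invocation, excluding its recursive sub-invocations. A call $\funcrtree(n',\cdot)$ only attaches a new root to the forest it receives, costing $\bigoh(n')=\bigoh(n)$. A two-argument call $\funcforest(n',\cdot)$ determines $m$ and then $\mu$ by binary search over $\bigoh(n)$ candidates, each step comparing two $\bigoh(n)$-bit integers, for $\bigoh(n\log n)$ in total. A four-argument call $\funcforest(n',m,\mu,\cdot)$ does a constant number of operations on $\bigoh(n)$-bit integers --- the expensive ones being $i\bmod j$ and $\lfloor i/j\rfloor$, costing $\bigoh(n^2)$ --- plus forming the union of two forests in time linear in its size, $\bigoh(n)$; so $\bigoh(n^2)$ overall. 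Finally, a call $\funcmultiset(rt_m,\mu,\cdot)$ costs $\bigoh(\mu^2\log rt_m)$ by Lemma~\ref{lem:multisetComplexity}; since $rt_m=2^{\bigoh(m)}$ gives $\log rt_m=\bigoh(m)$, and since $\mu\le n$ and $\mu m\le n'\le n$, we get $\mu^2\log rt_m=\bigoh(\mu^2 m)=\bigoh(\mu\cdot \mu m)=\bigoh(n^2)$; the $\mu$ recursive \funcrtree\ calls that convert the returned indices into trees are already among the $n$ \funcrtree\ calls counted above.

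Putting it together: there are $n$ \funcrtree\ invocations of cost $\bigoh(n)$ each, and $\bigoh(n)$ invocations of the \funcforest\ and \funcmultiset\ functions of cost $\bigoh(n^2)$ each, for a total of $\bigoh(n^3)$. Correctness --- that the tree returned is the $i$-th one in the order induced by Algorithm~\ref{alg:RootedEnumeration} --- follows from the description preceding Algorithm~\ref{alg:UnrankingRootedTree}, from equations \eqref{eqn:rec_rt}--\eqref{eqn:rec_f_nmmu}, and from Lemma~\ref{lem:multisetComplexity}. The main obstacle is the invocation count: the easy bound of $\bigoh(\sqrt{n'})$ size-classes per forest on $n'$ vertices would give only $\bigoh(n^{3/2})$ \funcforest/\funcmultiset\ calls and hence an $\bigoh(n^{7/2})$ estimate, so it is essential to charge size-classes to children and use $\sum_v(\text{number of children of }v)=n-1$; a secondary point of care is bookkeeping which arithmetic operations truly cost $\bigoh(n^2)$ rather than $\bigoh(n)$ on $\bigoh(n)$-bit operands, and invoking Lemma~\ref{lem:multisetComplexity} as a black box for the multiset unranking.
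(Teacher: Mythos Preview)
Your proposal is correct and follows a genuinely different route from the paper. The paper proceeds by setting up worst-case recurrences $T(n)$ and $T'(n,m,\mu)$ for the two variants of {\funcforest}, obtains
\[
T'(n,m,\mu)=T(n-\mu m)+\mu\,T(m-1)+\bigoh(\mu^2 m\log m),
\]
and then argues (via an informal superadditivity step $T(n-\mu m)+\mu\,T(m-1)\le T(n-\mu)$) that $T(n)\le T(n-1)+\bigoh(n^2)$, whence $T(n)=\bigoh(n^3)$. You instead exploit the structural fact that the recursion tree of {\funcrtree} mirrors the output tree: one {\funcrtree} call per vertex, and the chain of {\funcforest}/{\funcmultiset} calls spawned at a vertex $v$ is charged to the distinct child-subtree sizes of $v$, hence to the children themselves, whose total count is $n-1$. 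Multiplying the $\bigoh(n)$ invocation bound by an $\bigoh(n^2)$ per-invocation cost gives the result.

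Your accounting is arguably cleaner: it avoids the paper's superadditivity manoeuvre and makes explicit where the three factors of $n$ come from. It also surfaces the bit-length issue, which the paper silently treats in the unit-cost model. One small point of hygiene: when you invoke Lemma~\ref{lem:multisetComplexity} ``as a black box'' you are importing a bound that counts arithmetic operations, not bit operations, so strictly speaking your argument mixes models; under either model the $\bigoh(n^2)$ per {\funcmultiset} call and the $\bigoh(n^3)$ total still hold, so the conclusion is unaffected.
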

\begin{proof}
Let $T(n)$ (resp. $T'(n,m,\mu)$) be the maximum time needed for an invocation of $\funcforest(n,i)$ (resp. $\funcforest(n,m,\mu,i)$) to return.
An invocation $\funcforest(n,i)$ can find the value $m$ in $\bigoh(\log n)$ time by performing a binary search on the table $f^{\leq}_{n,m}$.
Similarly, $\mu$ can be found in $\bigoh(\log n/m)$ time by performing a binary search on all $\lfloor n/m \rfloor$ possible values of $\mu$.
It then invokes $\funcforest(n,m,\mu,i)$ and
the rest of the operations take constant time. 
We therefore have
\begin{equation}\label{eqn:t}
T(n) \leq \bigoh(\log n) + \max_{m, \mu} T'(n,m,\mu).    
\end{equation}

An invocation of $\funcforest(n,m,\mu,i))$, invokes $\funcforest(n - \mu \cdot m, lowForestIndex)$ which takes at most $T(n - \mu \cdot m)$ time.
It finds a multiset of $\mu$ integers from $[0, rt_m]$ which takes $\mu^2 \log rt_m$ time.
Then it invokes $\mu$ times $\funcforest(m-1, i')$ for some $i'$. 
Since the other operations take constant time, we get:
\begin{eqnarray*}
T'(n,m,\mu) & = & T(n - \mu \cdot m) + \mu \cdot T(m-1) + \mu^2 \log rt_m\\
& = & T(n - \mu \cdot m) + \mu \cdot T(m-1) + \bigoh(\mu^2 \cdot m \log m)
\end{eqnarray*}
where the  second equality is due to the fact that the number of rooted trees on $m$ vertices is at most $m!$.
Indeed, a rooted tree on $n$ vertices can be obtained by adding a leaf to any one of the vertices of a rooted tree on $n-1$ vertices, i.e., $rt_m \leq (m-1) rt_m$.

If $T(n)$ is $\bigoh(n)$ then the claim is correct and we are done.
Otherwise, $T(n)$ is $\Omega(n)$.
Therefore, $T(n - \mu \cdot m) + \mu T(m-1) \leq T(n - \mu \cdot m + \mu (m-1)) = T(n - \mu)$ for sufficiently large values of $n$ and $m$.
We proceed  as follows:
\begin{eqnarray*}
T'(n,m,\mu) & = & T(n - \mu \cdot m) + \mu \cdot T(m-1) + \bigoh(\mu^2 \cdot m \log m) \\
& \leq & T(n-\mu) + \bigoh(\mu^2 \cdot m \log m) < T(n - \mu) + \bigoh((\mu \cdot m) (\mu \cdot \log m))\\
& \leq & T(n-1) + \bigoh(n^2).
\end{eqnarray*}
We now substitute in \eqref{eqn:t}
\[
T(n) \leq \bigoh(\log n) + \max_{m, \mu} T'(n,m,\mu) \leq \bigoh(\log n) + T(n-1) + \bigoh(n^2) = T(n-1) + \bigoh(n^2)
\]
concluding that $T(n)$ is $\bigoh(n^3)$.
\end{proof}

\subsection{Free Trees}\label{sec:FreeTrees}
A \emph{free tree} is an unlabeled undirected tree.
In this section, unless stated otherwise, a tree is a free tree.  
Let $\FT(n)$ be the set of free trees on $n$ vertices. 
Fact \ref{fact:centroid} states that a tree is either monocentroidal or bicentroidal.
Moreover, we note that the conditions on the weight stated in Fact \ref{fact:centroid} constitute a dichotomy, thus characterize in which case a tree falls in.
Associate with a monocentroidal tree the rooted tree obtained by designating its centroid as root.
In this case, every subtree of the root has at most $(n-1)/2$ vertices.
With a bicentroidal tree we can associate the two rooted trees obtained by the removal of the edge joining them.
In this case, Fact \ref{fact:centroid} implies that each of them has $n/2$ vertices (thus $n$ is even).
We enumerate all the free trees by first enumerating the monocentroidal ones, 
i.e. those rooted trees on $n$ vertices with subtrees of at most $(n-1)/2$ vertices,
and then all the bicentroidal ones, i.e., the pairs of subtrees of at most $n/2$ vertices (see Algorithm \ref{alg:EnumerateFreeTree}). 
This technique, introduced in \cite{Wilf81}, clearly guarantees that we generate all the (non-isomorphic) free trees.

\renewcommand{\alg}{\textsc{Enumerating Free Trees}}
\newcommand{\enumftree}{\textsc{EnumerateFreeTrees}}

\alglanguage{pseudocode}
\begin{algorithm}[H]
\caption{{\alg}}\label{alg:EnumerateFreeTree}
\begin{algorithmic}[1]

\Ensure{Lazily returns $\FT(n)$}
\Function{\enumftree}{$n$}
\State $\TT \gets \emptyset$
\For{$F \in~$\Call{\funcforest}{$n-1,\lfloor \frac{n-1}{2} \rfloor$}}
    \State $T \gets$ the tree obtained from $F$ by connecting all its roots to a new root.
    \State \Yield $T$
\EndFor
\If {$n \equiv 0 \mod 2$}
    \For{each pair $\set{T_1,T_2} \in {\RT(n/2) \choose 2}$}
    \State $T \gets $ the tree obtained by joining the roots of $T_1$ and $T_2$
    \State \Yield $T$
    \EndFor
\EndIf
\State \Return
\EndFunction
\Statex
\end{algorithmic}
\end{algorithm}

We now proceed with unranking.
Let $t_n = \abs{\FT(n)}$.
Let also, for $i \in \set{1,2}$, $t^{(i)}_n$ be the number of trees on $n$ vertices with $i$ centroids. 
Since a tree has either one or two centroids, we have $t_n = t^{(1)}_n + t^{(2)}_n$.
Moreover,
\begin{eqnarray*}
t^{(1)}_n & = & f^\leq_{n-1,\lfloor (n-1)/2 \rfloor},~\textrm{and}\\
t^{(2)}_n & = & \left\{ \begin{array}{ll}
\CC{rt_{n/2}}{2}      &  \textrm{if~}n \equiv 0 \mod 2\\
0            &  \textrm{otherwise}.
\end{array} \right.    
\end{eqnarray*}

\begin{observation}
$t^{(2)}_n$ can be computed in constant time from $rt_{n/2}$.
\end{observation}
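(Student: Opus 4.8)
The plan is to read the value off directly from the closed form for $t^{(2)}_n$ stated immediately above the observation. First I would test whether $n$ is even; this parity check is a single constant-time operation. If $n$ is odd, then by definition $t^{(2)}_n = 0$, and nothing further is needed.

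If $n$ is even, then by definition $t^{(2)}_n = \CC{rt_{n/2}}{2}$, and since $\CC{i}{j} = \binom{i-1+j}{j}$ we get $t^{(2)}_n = \binom{rt_{n/2}+1}{2} = \tfrac{1}{2}\, rt_{n/2}\,(rt_{n/2}+1)$. Given the value $rt_{n/2}$ as input, evaluating the right-hand side costs one addition, one multiplication, and one division. Under the arithmetic model used throughout the paper — the same one under which the recurrences of Lemma~\ref{lem:complexity} and the binomials of Lemma~\ref{lem:multisetComplexity} are treated as constant-time — each of these is an $\bigoh(1)$ operation, so $t^{(2)}_n$ is obtained from $rt_{n/2}$ in constant time.

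There is essentially no obstacle here: the statement is an immediate consequence of the formula preceding it. The only subtlety worth flagging is that ``constant time'' is meant in the unit-cost RAM sense for arithmetic on the (possibly exponentially large) integers $rt_{n/2}$, consistent with the convention adopted elsewhere in the paper, rather than in the bit-complexity model discussed in Section~\ref{sec:Trees}.
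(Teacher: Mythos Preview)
Your argument is correct and is exactly the intended one: the paper states this observation without proof, since it follows immediately from the preceding closed form $t^{(2)}_n = \CC{rt_{n/2}}{2} = \binom{rt_{n/2}+1}{2}$, which is computed with a constant number of arithmetic operations under the same unit-cost model used in Lemma~\ref{lem:complexity}. There is nothing to add.
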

Therefore, all the values $t^{(i)}_n$ can be computed in $\bigoh(N^2 \log N)$ time together with the values in Section \ref{sec:RootedTrees}.

\renewcommand{\alg}{\textsc{Unranking Free Trees}}
\newcommand{\functree}{\textsc{FindFreeTree}}

\alglanguage{pseudocode}
\begin{algorithm}[H]
\caption{{\alg}}\label{alg:FreeTree}
\begin{algorithmic}[1]

\Require{$0 \leq i < t_n$}
\Ensure{Return the $i$-th tree of $\FT(n)$}
\Function{\functree}{$n,i$}
\If{$n \equiv 0 \mod 2$ and $i \geq t^{(1)}_n$}
\State $i \gets i - t^{(1)}_n$
\State $\set{T_1, T_2} \gets$ the $i$-th multiset of two rooted trees from $\RT(n/2)$
\State \Return the tree obtained by connecting the roots of $T_1$ and $T_2$
\Else 
\LineComment{The order of forests and the value of $i$ guarantees}
\LineComment{that the size of the largest subtree is at most $(n-1)/2$}
\State \Return \Call{\funcrtree}{n,i}
\EndIf
\EndFunction
\Statex

\end{algorithmic}
\end{algorithm}

\section{Weighted Trees}\label{sec:WeightedTrees}
In \cite{BielakP12}, the authors show that Fact \ref{fact:centroid} is valid also for weighted trees with strictly positive weights.
Since we will be dealing with weighted trees where we allow zero weights as well, in Section \ref{subsec:StructuralWT}, we extend Fact \ref{fact:centroid} to all non-negatively weighted trees, generalizing the result in \cite{BielakP12}. 
In Theorem \ref{thm:CentroidsNonNegativeWeights}, we show that in this case the number of centroids is not limited to two; 
however they still enjoy some nice properties. 
On the other hand, Fact \ref{fact:centroid} holds for the so-called central centroids. 
Since, due to zero weights, the number of trees of total weight $n$ is unbounded, subsequently, we focus on the so-called canonical weighted trees in which we bound the number of zero-weight vertices in a certain way.
In Corollary \ref{coro:CanonicalCentroids} we show that the number of centroids in canonical weighted trees is again limited, this time by three. In Section \ref{subsec:AlgWeighted}, we explain how centroid-rooted canonical weighted trees are used to generate (non-isomorphic) weighted trees.

\subsection{Structural Properties}\label{subsec:StructuralWT}
\runningtitle{Weighted Trees and Centroids}
A \emph{weighted tree} is a pair $(T,w)$ where $T$ is a tree and $w$ is a non-negative integral weight function on its vertices such that at least one vertex has strictly positive weight.
We extend the weight function $w$ to sets of vertices naturally, so that for every subset $U$ of the vertices of $T$, we have 
$w(U) \defined \sum_{u \in U} w(u)$.
Moreover, we denote $w(T) \defined  w(V(T))>0$.

Let $(T,w)$ be a weighted tree of weight $W$ and $v$ a vertex of $T$. A subtree $T'$ of $v$ in $T$ is \emph{heaviest} if its weight is maximum among all subtrees of $v$ in $T$.
We will term this weight as the \emph{heaviest subtree weight} of $v$ in $(T,w)$ and denote it by $\hs_{T,w}(v)$ or simply $\hs(v)$ when $(T,w)$ is clear from the context. In the degenerated case where $v$ is the only vertex of $T$, i.e., there is no heaviest subtree, $\hs(v)$ is zero.
A \emph{centroid} of $(T,w)$ is a vertex with minimum heaviest subtree weight in $(T,w)$.

The following notions are frequently used in the proof of Theorem \ref{thm:CentroidsNonNegativeWeights}. 
For brevity, we denote a weighted tree $(T,w)$ simply as $T$ and we assume that the weight function is $w$. 
Given a non-trivial path $P$ in a tree $T$ and a vertex $v$ of $P$, denote by $T_{v,P}$ (or $T_v$ if no ambiguity arises) the maximal subtree of $T$ containing $v$, but no other vertex of $P$. 
We denote by $T_P$ the maximal subtree of $T$ that contains all the internal vertices of $P$ but not its endpoints (see Figure \ref{fig:TreePathSeparation}). From these definitions it follows that if $P$ is a non-trivial $(a-b)$-path (i.e., $a \neq b$) then $\set{ V(T_a), V(T_b), V(T_P)}$ is a partition of $V(T)$.
Therefore, $w(T)= w(T_a) + w(T_b) + w(T_P)$.
Note that if $P$ has no internal vertices then $T_P$ is empty.

\begin{figure}
    \centering
        \includegraphics[width=0.7\textwidth]{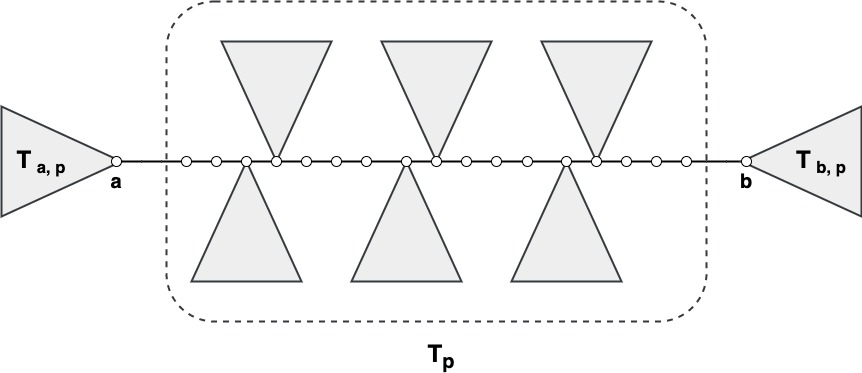}
    \caption{The partition of a tree $T$ implied by a non-trivial $a$-$b$ path $P$}
    \label{fig:TreePathSeparation}
\end{figure}

\begin{theorem}\label{thm:CentroidsNonNegativeWeights}
Let $(T,w)$ be a weighted tree with $w(T)=n$. Then all the centroids of $T$ are located on a path $P$ with endpoints $c_1$ and $c_2$ with $w(T_P)=0$, that is, the weight of the maximal subtree of $T$ containing $P\setminus\{c_1,c_2\}$ is zero. Moreover, if $P$ is non-trivial then $\hs(c_1)=\hs(c_2)=w(T_{c_1,P})=w(T_{c_2,P})=n/2$; otherwise $P$ is a trivial path consisting of one vertex $c$ where $hs(c) < n/2$.
\end{theorem}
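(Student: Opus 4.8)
Proof proposal for Theorem~\ref{thm:CentroidsNonNegativeWeights}

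The plan is to work with the minimum heaviest‑subtree weight $m \defined \min_{v\in V(T)}\hs(v)$ and the set $C$ of centroids of $T$, and to split into the cases $m<n/2$ and $m=n/2$. First I would record the a~priori bound $m\le n/2$ by a greedy‑walk argument adapted to weights: starting from an arbitrary vertex, as long as the current vertex $v$ has a subtree $A$ with $w(A)>n/2$, move to the neighbour of $v$ lying in $A$; since the subtree of the new vertex containing $v$ then has weight $n-w(A)<n/2$, the walk never retraces an edge, and as $T$ is a tree it therefore cannot revisit a vertex and must terminate at a vertex all of whose subtrees have weight at most $n/2$. Since weights are integers, this yields the dichotomy: either $m<n/2$, or $m=n/2$ and $n$ is even.

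In the case $m<n/2$ I would show that $C$ is a singleton, which gives the trivial‑path conclusion. Suppose for contradiction that $c_1\ne c_2$ are both centroids; let $v$ be the neighbour of $c_1$ on the path from $c_1$ to $c_2$ and let $A$ be the subtree of $c_1$ containing $v$, so $A$ also contains $c_2$. Then $w(A)\le\hs(c_1)=m$, whereas the subtree of $c_2$ containing $c_1$ contains all of $V(T)\setminus V(A)$, hence $n-w(A)\le\hs(c_2)=m$; adding gives $n\le 2m<n$, a contradiction. Thus $C=\set{c}$, $P$ is the trivial path at $c$, and $\hs(c)=m<n/2$, as claimed.

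In the case $m=n/2$ (so $n$ is even) I would first note that $|C|\ge 2$: if $c$ were the unique centroid, take a heaviest subtree $A$ of $c$ (with $w(A)=n/2$) and its neighbour $v$; the subtrees of $v$ are $T\setminus A$ of weight $n/2$ and the subtrees inside $A$, of total weight $n/2-w(v)\le n/2$, so $\hs(v)=n/2=m$ and $v$ is another centroid, a contradiction. The engine is then the claim: for any two distinct centroids $c,c'$, with $Q$ the path joining them, $w(T_{c,Q})=w(T_{c',Q})=n/2$ and $w(T_Q)=0$. Indeed $V(T)\setminus V(T_{c,Q})$ induces a single subtree of $c$, so $n/2=\hs(c)\ge n-w(T_{c,Q})$, i.e.\ $w(T_{c,Q})\ge n/2$; symmetrically $w(T_{c',Q})\ge n/2$; and since $\set{V(T_{c,Q}),V(T_{c',Q}),V(T_Q)}$ is a partition of $V(T)$ of total weight $n$, these bounds together with $w(T_Q)\ge 0$ are all tight. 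Using the claim I would deduce that $C$ lies on a single path: otherwise the minimal subtree of $T$ spanning $C$ has a vertex $x$ of degree at least three, so three components of $T-x$ each contain a centroid, and by the claim (applied to each such centroid and one lying beyond $x$) the part of $T$ reachable from that centroid without passing through $x$ has weight $n/2$ and lies in the corresponding component, forcing $3\cdot(n/2)\le n$, impossible. Finally, letting $c_1,c_2$ be the two extreme centroids on this path and $P$ the subpath between them (so every centroid lies on $P$), the claim gives $w(T_P)=0$ and $w(T_{c_1,P})=w(T_{c_2,P})=n/2$, while $\hs(c_1)=\hs(c_2)=m=n/2$ since $c_1,c_2\in C$; this is exactly the non‑trivial‑path conclusion.

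The step I expect to be the main obstacle is the analysis for $m=n/2$. Once zero‑weight vertices are allowed, the classical fact that $\hs$ strictly decreases as one moves toward a centroid fails, so the usual argument bounding the number of centroids by two no longer applies, and a weighted tree can genuinely have more than two centroids. The remedy is to abandon local monotonicity in favour of the global weight identity $w(T)=w(T_{c_1,P})+w(T_{c_2,P})+w(T_P)$; combined with the lower bounds $w(T_{c_i,P})\ge n/2$ it pins down all three weights simultaneously and forces every centroid onto a single zero‑weight ``spine'' $T_P$.
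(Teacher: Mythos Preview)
Your proof is correct and complete for the theorem as stated, but it is organized quite differently from the paper's argument, and the comparison is worth noting.

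The paper splits on the \emph{number} of centroids. For two or more centroids it picks a farthest pair $c_1,c_2$, first proves the somewhat delicate fact that $c_2$ must lie in a \emph{heaviest} subtree of $c_1$ (arguing by contradiction through an intermediate vertex $v$), and only then reads off $w(T_{c_1})=w(T_{c_2})=n/2$, $w(T_{P'})=0$, and finally checks that every vertex of $P'$ is a centroid and no vertex off $P'$ is.

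You instead split on the value $m=\min_v \hs(v)$, having first obtained $m\le n/2$ by the greedy walk. Your engine in the case $m=n/2$ is the clean ``claim'': for any two centroids $c,c'$ the partition identity $w(T_{c,Q})+w(T_{c',Q})+w(T_Q)=n$ together with the two lower bounds $w(T_{c_i,Q})\ge n - \hs(c_i)=n/2$ forces all three weights simultaneously. This replaces the paper's local ``heaviest subtree'' analysis by a single global accounting step, and it then lets you dispose of the path structure via the Steiner-tree branching argument rather than by the farthest-pair choice. One small imprecision: in that branching step you write that the component ``has weight $n/2$'', whereas what the claim gives is $w(T_{c,Q})=n/2$ with $T_{c,Q}$ contained in the component, so the component has weight $\ge n/2$; the contradiction $3\cdot(n/2)\le n$ of course still follows. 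The trade-off is that the paper's route additionally yields that \emph{every} vertex of $P$ is a centroid (so $C$ equals $P$, not merely $C\subseteq P$); your argument does not establish this, though the theorem as stated does not require it.
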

\begin{proof}
The claim trivially holds for a tree with one vertex $v$ since $\hs(v)=0<n/2$.
Let $(T,w)$ be a weighted tree on at least two vertices.
We consider two disjoint and complementing cases:
\begin{itemize}
    \item [Case 1:] {$T$ has exactly one centroid $c$:} 
    Then the unique centroid is on a trivial path $P=c$. In this case $T_P$ is empty, thus $w(T_P)=0$. \\
    Since $T$ has at least two vertices, $c$ has at least one heaviest subtree.
    Without loss of generality, let $T_1$ be a heaviest subtree of $c$ and assume by way of contradiction that $w(T_1) = \hs(c) \geq n/2$.
    Then $w(T \setminus T_1) \leq n/2 \leq w(T_1)$.
    Consider the vertex $v$ of $T_1$ that is adjacent to $c$.
    The subtrees of $v$ are $T \setminus T_1$ and other subtrees contained in $T_1 - v$.
    Clearly, for a subtree $T_i$ contained in $T_1 - v$ we have $w(T_i) \leq w(T_1)$.
    Then $\hs(v) \leq w(T_1) = \hs(c)$, contradicting the fact that $c$ is the unique centroid of $T$.

    \item [Case 2] $T$ has at least two centroids:
    Let $c_1, c_2$ be two distinct centroids farthest from each other,
    i.e. there is no other centroid $c_3$ such that $c_1$ (resp. $c_2$) is on the path between $c_2$ (resp. $c_1$) and $c_3$.
    Without loss of generality, let $T_1$ be a heaviest subtree of $c_1$. 
    If $w(T_1)=\hs(c_1)=0$, then all the subtrees of $c_1$ have zero weight.
    Thus $w(c_1) = w(T) = n > 0$ and $c_1$ is the unique centroid of $T$, a contradiction.
    Therefore, $w(T_1)>0$, and in particular $T_1$ is not empty.
    Let $v$ be the vertex of $T_1$ adjacent to $c_1$.
    Consult Figure \ref{fig:Lemma1-1} for this discussion.
    We will now show that $c_2$ is in $T_1$.

    \begin{figure}
    \centering
        \includegraphics[width=0.4\textwidth]{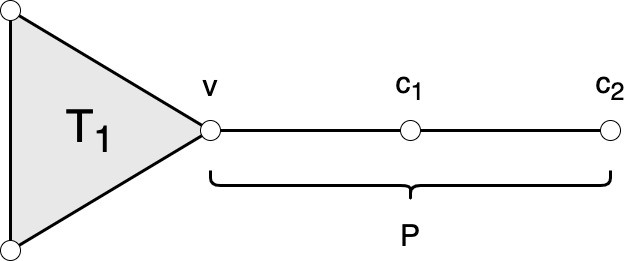}
    \caption{Showing that $c_2$ is in $T_1$}
    \label{fig:Lemma1-1}
    \end{figure}

    Suppose that $c_2$ is not in $T_1$ and let $P$ be the path between $v$ and $c_2$.
    The path $P$ partitions the vertex set of $T$ into subtrees $T_v=T_1, T_{c_2}$ and $T_P$.
    Since $c_2$ is a centroid, we have 
    \[
    w(T_1) = \hs(c_1) = \hs(c_2) \geq w(T_1) + w(T_P) \geq w(T_1).
    \]
    implying that the inequalities hold by equality.
    Then the last equality implies $w(T_P)=0$.
    By the choice of $c_1$ and $c_2$, $v$ is not a centroid.
    Therefore, $\hs(v) > \hs(c_1) = w(T_1)$. 
    All the subtrees of $v$ except $T_P \cup T_{c_2}$ are contained in $T_1$.
    Therefore, $v$ has a unique heaviest subtree, namely $T_P \cup T_{c_2}$. 
    Therefore,
    \[
    w(T_1) < \hs(v)=w(T_P)+w(T_{c_2})=w(T_{c_2}).
    \]
    On the other hand, we have 
    \[
    w(T_1) = \hs(c_1) \geq w(T_{c_2})
    \]
    which contradicts the previous inequality. 
    Therefore, $c_2$ is in $T_1$ which must be the unique subtree of $c_1$.
    By symmetry, $c_1$ is in the unique subtree $T_2$ of $c_1$.
    
    In the rest of the proof we consider the path $P'$ between $c_1$ and $c_2$ that partitions $T$ into subtrees $T_{c_1}, T_{c_2}$ and $T_{P'}$, and we show that the statement holds for $P'$. 
    Since $c_2$ is in the heaviest subtree of $c_1$ we have $\hs(c_1)=w(T_{P'})+w(T_{c_2})$ and symmetrically, $\hs(c_2)=w(T_{P'})+w(T_{c_1})$.
    Since both vertices are centroids, we have $\hs(c_1)=\hs(c_2)$ which implies $w(T_{c_1})=w(T_{c_2})$.
    
    We now show that $w(T_{P'})=0$.
    Suppose that $w(T_{P'})>0$. 
    Then $P'$ contains at least one internal vertex $u$ such that $w(T_{u,P'}) > 0$. 
    Note also that $w(T_{c_1}) > 0$ since otherwise $c_1$ is not a centroid.
    Then
    \begin{eqnarray*}
    \hs(u) & \leq & \max \set{w(T_{c_1})+w(T_{P'})-w(T_u),w(T_u)}\\
    & < & \max \set{w(T_{c_1})+w(T_{P'}), w(T_{c_1})+w(T_u)}\\
    & = & w(T_{c_1}) + \max \set{w(T_{P'}), w(T_u)} 
    = w(T_{c_1}) + w(T_{P'}) = \hs(c_2)
    \end{eqnarray*}
    contradicting the fact that $c_2$ is a centroid.
    Therefore, $w(T_{P'})=0$.
    We conclude that 
    \[
    \hs(c_1)=\hs(c_2)=w(T_{c_1})=w(T_{c_2})=n/2
    \]
    as claimed.
    
    Let $u$ be an internal vertex of $P'$. 
    Then $\hs(u)=\max \set{ w(T_{c_1}),w(T_{c_2} } = n/2$, implying that $u$ is a centroid.
    Therefore all the vertices of $P'$ are centroids.
    We now show that there are no other centroids in $T$.
    There is no centroid in $T_{c_1}$ (resp. $T_{c_2}$) except $c_1$ (resp. $c_2$) by the way $c_1$ and $c_2$ are chosen.
    We remain with the vertices of $T_{P'}$ that are not in $P'$.
    Let $u'$ be a vertex of $T_{P'} \setminus P'$. 
    Then $\hs(u') = w(T_{c_1}) + w(T_{c_2}) = n > n/2 = \hs(c_1)$, thus $u'$ is not a centroid.
\end{itemize}
\end{proof}

The notion of isomorphism extends naturally to weighted graphs in the following way. 
We say that two vertex-weighted graphs (resp. digraphs) are \emph{isomorphic} if there is a bijection between their vertex sets that preserves both adjacencies (resp. directed adjacencies) and weights. 
Our objective is to generate non-isomorphic free weighted trees. 
We will achieve this goal by establishing a one-to-one mapping between the set of free weighted trees and the set of rooted weighted trees in Lemma \ref{lem:iso}.
To this end, we follow the technique employed in \cite{Wilf81} that makes use of the centroids.
Instead, we use the notion of central centroids defined below.

\begin{definition}\label{def:centroid} 
\textbf{(Central centroid:)} A centroid is \emph{central} if it is a center of the path of centroids. 
\end{definition}
Clearly, a weighted tree $T$ has one or two central centroids.
The following definition assumes a predefined total order $\prec$ on the set of weighted rooted trees.
\begin{definition}\label{def:RootedTree} 
\textbf{(Centroid-rooted tree:)} The \emph{centroid-rooted tree} $\dir{T}$ of the weighted tree $T$ is the rooted tree obtained from $T$ by designating a root $r$ as follows: 
    \begin{itemize}
        \item If $T$ has exactly one central centroid $c$ then $r=c$.
        \item Otherwise, $T$ has two adjacent central centroids. Let $T_1$ and $T_2$ be the rooted subtrees of $T$ obtained by the removal of the edge $c_1 c_2$,  where $c_i \in V(T_i)$ and $c_i$ is the root of $T_i$ for $i \in \set{1,2}$. If $T_1$ and $T_2$ are isomorphic, then one of $c_1, c_2$ is arbitrarily designated as $r$. If $T_1$ and $T_2$ are non-isomorphic, then designate $r=c_1$ if $T_1 \prec T_2$ and $r=c_2$ otherwise.
    \end{itemize}
\end{definition}
An illustration of centroid-rooted trees is given in Figure \ref{fig:weighthedblockcuttree} for a weighted block tree, once block trees are introduced in Section \ref{sec:BlockTrees}.
\begin{lemma}\label{lem:iso}
Two weighted trees $T$ and $T'$ are isomorphic if and only if $\dir{T}$ and $\dir{T'}$ are isomorphic.
\end{lemma}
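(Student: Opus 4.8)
The plan is to dispatch the ``if'' direction at once and to reduce the converse to the statement that the centroid-rooting of Definition \ref{def:RootedTree} is \emph{canonical}, i.e., compatible with isomorphisms. For the ``if'' direction: an isomorphism of the rooted weighted trees $\dir{T}$ and $\dir{T'}$ is in particular a weight-preserving, adjacency-preserving bijection $V(T)\to V(T')$, hence a free weighted tree isomorphism $T\to T'$. For the converse, I fix a free weighted tree isomorphism $\phi\colon T\to T'$, and the goal is to modify $\phi$ (if necessary, by post-composing with an automorphism of $T'$) into an isomorphism that sends the root of $\dir{T}$ onto the root of $\dir{T'}$; such a bijection is automatically an isomorphism of the rooted trees $\dir{T}$ and $\dir{T'}$.

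The first substantive step is that an isomorphism preserves heaviest subtree weights. Removing a vertex $v$ from $T$ and removing $\phi(v)$ from $T'$ produces connected components in bijection via $\phi$, and $\phi$ preserves vertex weights; hence the multiset of subtree weights at $v$ equals that at $\phi(v)$, so $\hs_{T,w}(v)=\hs_{T',w'}(\phi(v))$ for every $v$. Consequently $\phi$ restricts to a bijection between the centroids of $T$ and the centroids of $T'$. By Theorem \ref{thm:CentroidsNonNegativeWeights} these centroid sets are exactly the vertex sets of the respective centroid paths $P$ and $P'$ (the proof of that theorem shows that every vertex of the path is a centroid and that there are no others); since every path in a tree is induced and $\phi$ is a graph isomorphism, $\phi$ maps $P$ onto $P'$. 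A path isomorphism preserves distances, hence eccentricities, hence centers, so $\phi$ carries the one or two central centroids of $T$ onto those of $T'$.

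Next comes the case analysis dictated by Definition \ref{def:RootedTree}. If $T$ has a unique central centroid $c$, then $\phi(c)$ is the unique central centroid of $T'$, and $\phi$ itself is the required isomorphism $\dir{T}\to\dir{T'}$. If $T$ has two adjacent central centroids $c_1,c_2$, then $\phi$ maps the edge $c_1c_2$ onto the edge joining the two central centroids $c_1',c_2'$ of $T'$, hence maps the unordered pair of rooted subtrees $\{T_1,T_2\}$ obtained by deleting $c_1c_2$ onto the pair $\{T_1',T_2'\}$ obtained by deleting $c_1'c_2'$. If $T_1\cong T_2$ then also $T_1'\cong T_2'$; the roots are chosen arbitrarily on both sides, and since $T_1\cong T_2$ the rooted tree obtained by rooting at $c_1$ is isomorphic to the one obtained by rooting at $c_2$ (and likewise for $c_1',c_2'$), so $\dir{T}\cong\dir{T'}$ for every admissible choice. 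If $T_1\not\cong T_2$, I use that the fixed order $\prec$ is a total order on isomorphism classes of rooted weighted trees, hence isomorphism-invariant: because $T_i\cong\phi(T_i)$, the relation $T_1\prec T_2$ holds iff $\phi(T_1)\prec\phi(T_2)$, so Definition \ref{def:RootedTree} selects the root of $\dir{T}$ and of $\dir{T'}$ consistently and $\phi$ is the desired rooted isomorphism.

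I expect the main obstacle to be the bicentroidal case with two isomorphic halves, where the root is declared ``arbitrary'': one must verify that the whole construction is well defined up to isomorphism regardless of that choice, which reduces to the observation that a tree with two central centroids admits an automorphism swapping the two subtrees of the central edge precisely when those subtrees are isomorphic. A secondary point that should be stated explicitly is the interpretation of $\prec$: unless it is read as a total order on isomorphism classes of rooted weighted trees, $\dir{T}$ is not even well defined up to isomorphism; with that reading, the tie-breaking step is automatically compatible with $\phi$, as used above.
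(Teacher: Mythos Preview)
Your proposal is correct and follows essentially the same route as the paper: the easy direction via underlying graphs, then for the converse show that any free isomorphism $\phi$ preserves $\hs$-values, hence the centroid path, hence the set of central centroids, and finish by the same case split on one versus two central centroids (with the further sub-split on whether the two halves are isomorphic). Your treatment of the bicentroidal isomorphic-halves case via the swapping automorphism is in fact cleaner than the paper's, and your remark that $\prec$ must be read as a total order on isomorphism classes is exactly the implicit assumption the paper uses when it writes ``$T_1\prec T_2$, then $T'_1\prec T'_2$ since $f$ preserves subtrees.''
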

\begin{proof}
($\Leftarrow$) Suppose that $\dir{T}$ and $\dir{T'}$ are isomorphic. Since $T$ and $T'$ are their respective underlying graphs, $T$ and $T'$ are isomorphic.

($\Rightarrow$)
Conversely, suppose that $T$ and $T'$ are isomorphic, and let $f: V(T) \to V(T')$ an adjacency-preserving bijection. 
Then $f$ preserves the non-arcs of $\dir{T}$. 
As for the arcs, they are also preserved except possibly their direction. 
It remains to show that $f$ preserves the directions of the arcs. 
We observe that $f$ preserves the subtrees of a vertex, and in particular their set of weights. Thus $f$ preserves the $\hs$ values, i.e., $\hs_{T'}(f(v))=\hs_T(v)$. Therefore, $f$ preserves the set of vertices attaining the minimum of $\hs$, i.e. the path of centroids. We conclude that $f$ preserves the set of central centroids.
By Theorem \ref{thm:CentroidsNonNegativeWeights}, a weighted tree has either one or two central centroids.
\begin{itemize}
    \item \textbf{Case 1:} If $T$ has exactly one central centroid $c$ then $T'$ has exactly one central centroid $c'$ and $c'=f(c)$. 
    Then the edges of $\dir{T}$ and $\dir{T'}$ are directed towards $c$ and $c'$, respectively. 
    Therefore, $f$ preserves the directions of the arcs.
    
    \item \textbf{Case 2:} If $T$ has two central centroids $c_1$ and $c_2$ then $T'$ has two central centroids $c'_1=f(c_1)$ and $c'_2=f(c_2)$.
    Assume, without loss of generality, that the arc between $c_1$ and $c_2$ is directed from $c_1$ to $c_2$. 
    Let also $T_1, T_2$ (respectively $T'_1, T'_2$) be the two rooted subtrees of $T$ (respectively $T')$ obtained as in Definition \ref{def:RootedTree}. 
    Since the edges of $T_i$ are directed towards $c_i$ and the edges of $T'_i$ are directed towards $c'_i=f(c_i)$, for $i \in \set{1,2}$ $f$ preserves the directions of all the arcs except possibly the arc between $c_1$ and $c_2$, that might be directed from $c'_2$ to $c'_1$. 
    If $T_1$ and $T_2$ are isomorphic that we define a new bijection $f'$ that coincides with $f$ everywhere, except that $f'(c_1)=c'_2$ and $f'(c_2)=c_1$. 
    Then $f'$ is an isomorphism from $\dir{T}$ to $\dir{T'}$. 
    Otherwise, since $c_1 c_2$ is directed from $c_1$ to $c_2$ in $\dir{T}$ we have $T_1 \prec T_2$. 
    Then $T'_1 \prec T'_2$ since $f$ preserves subtrees.
    This contradicts the assumption that $c_1 c_2$ is directed from $c_2$ to $c_1$ in $\dir{T'}$.
\end{itemize}
\end{proof}

It is noted that, since we allow for zero weights, the number of vertices of a weighted tree of a given weight $n$ is unbounded.
Therefore, the following definition is introduced, partitioning the set of weighted trees of a given weight $n$ into a finite set of equivalence classes, by assigning a canonical representative to each equivalence class. The idea is to use these canonical representations to enumerate non-isomorphic weighted tree.

\begin{definition}
\textbf{(Canonical weighted tree:)}
A weighted tree $(T,w)$ is \emph{canonical} if the set of zero weight vertices of $T$ constitutes an independent set and every leaf of $T$ has positive weight.
\end{definition}

Given a weighted tree, we can obtain a unique canonical tree of the same weight by first contracting every subtree consisting of zero-weight vertices to a single vertex and then removing all zero-weight leaves. For weighted trees, we have the following corollary that is based on Theorem \ref{thm:CentroidsNonNegativeWeights}.

\begin{corollary}\label{coro:CanonicalCentroids}
If $(T,w)$ is a canonical weighted tree then exactly one of the following holds.
\begin{enumerate}
    \item $T$ has exactly one centroid $c$ and $\hs(c) < \frac{n}{2}$.
    \item $T$ has exactly two centroids $c_1, c_2$ which are adjacent and the removal of the edge $c_1 c_2$ 
    partitions $T$ into two subtrees $T_1, T_2$ of weight $\frac{n}{2}$ each.
    Moreover, every proper subtree of $T_1$ (resp. $T_2$) has weight at most $\frac{n}{2}-1$. 
    \item $T$ has exactly three centroids with a unique central centroid $c$ with $w(c)=0$ whose removal partitions $T$ into two subtrees of weight $\frac{n}{2}$ each.
\end{enumerate}
\end{corollary}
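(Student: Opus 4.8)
The plan is to deduce the statement from Theorem~\ref{thm:CentroidsNonNegativeWeights}, using the canonical hypothesis only to bound the length of the path of centroids. First, apply Theorem~\ref{thm:CentroidsNonNegativeWeights} to get the path $P$ of all centroids of $T$, with endpoints $c_1,c_2$ and $w(T_P)=0$. Every internal vertex of $P$ lies in $T_P$, hence has weight zero; since the zero-weight vertices of a canonical tree form an independent set, $P$ has at most one internal vertex. Thus $|V(P)|\in\set{1,2,3}$, and because (by the theorem) the centroid set equals $V(P)$, the number of centroids is $1$, $2$, or $3$. This already gives a trichotomy whose three items are mutually exclusive, being distinguished by the number of centroids; it remains to match each value of $|V(P)|$ to the corresponding item.

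If $|V(P)|=1$, then the trivial-path clause of Theorem~\ref{thm:CentroidsNonNegativeWeights} gives $\hs(c)<n/2$ for the unique centroid $c$, which is item~1. If $|V(P)|=3$, write $P=c_1\,c\,c_2$; then $w(c)=0$ and $c$, being the unique center of $P$, is the unique central centroid. I would then argue $T_P=\set{c}$: any further vertex of $T_P$ would yield a zero-weight neighbour of $c$, contradicting independence. Hence $\deg_T(c)=2$ with neighbours $c_1,c_2$, so $T-c$ has exactly the two components $T_{c_1,P}$ and $T_{c_2,P}$, each of weight $n/2$ by the theorem; together with $w(c)=0$ this is item~3.

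The remaining case $|V(P)|=2$ means $P=c_1c_2$ is an edge: the two centroids are adjacent, and removing $c_1c_2$ splits $T$ into $T_1:=T_{c_1,P}$ and $T_2:=T_{c_2,P}$, each of weight $n/2$ by Theorem~\ref{thm:CentroidsNonNegativeWeights}. The only substantive point is the ``moreover'' clause. Let $S$ be a subtree of $c_1$ in $T_1$. Then $w(S)\le w(T_1)=n/2$; suppose $w(S)=n/2$. Since $w(T_1)=w(c_1)+\sum_{S'}w(S')$ over the subtrees $S'$ of $c_1$ in $T_1$, this forces $w(c_1)=0$ and $S$ to be the unique such subtree, so $c_1$ has degree $2$ in $T$ with neighbours $c_2$ and the vertex $v$ of $S$ adjacent to $c_1$. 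By independence $w(v)\ge 1$, and the subtrees of $v$ in $T$ are the component of $T-v$ meeting $c_1$, of weight $w(c_1)+w(T_2)=n/2$, and the components of $S-v$, each of weight at most $w(S)-w(v)<n/2$. Hence $\hs_T(v)=n/2=\hs_T(c_1)$, making $v$ a centroid not on $P$, which contradicts Theorem~\ref{thm:CentroidsNonNegativeWeights}. Therefore $w(S)\le n/2-1$; the same argument applies to $T_2$, giving item~2.

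The step I expect to be the main obstacle is this last one: recognising that ``proper subtree of $T_1$'' should be read as a subtree of the root $c_1$, and that the apparently threatening configuration of a weight-$n/2$ subtree of $c_1$ cannot occur when $|V(P)|=2$, precisely because it would create a centroid outside $P$ (and, in effect, push the tree into the $|V(P)|=3$ case). Everything else is routine bookkeeping built on Theorem~\ref{thm:CentroidsNonNegativeWeights} and the independence of zero-weight vertices.
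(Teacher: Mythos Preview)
Your proposal is correct and follows essentially the same approach as the paper's proof: both invoke Theorem~\ref{thm:CentroidsNonNegativeWeights} to obtain the centroid path $P$, use the independence of zero-weight vertices to bound $|V(P)|\le 3$, and then split into the three cases. Your treatment of the ``moreover'' clause in the two-centroid case (deducing $w(c_1)=0$ and exhibiting a third centroid $v$) is exactly the paper's argument, only spelled out in more detail; likewise your handling of the three-centroid case (showing $T_P=\{c\}$ via independence) mirrors the paper's reasoning.
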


\begin{proof}
Let $(T,w)$ be a canonical weighted tree and $P$ the path of the centroids of $(T,w)$.
Clearly, the stated cases are mutually-disjoint.

\begin{enumerate}
    \item If $(T,w)$ has exactly one centroid then the first case follows from Theorem \ref{thm:CentroidsNonNegativeWeights}.

    \item If $(T,w)$ has exactly two centroids $c_1, c_2$ then they are clearly adjacent and 
    the removal of the edge $c_1 c_2$ partitions $T$ into two subtrees $T_1, T_2$ each of which has weight equal to $n/2$ by Theorem \ref{thm:CentroidsNonNegativeWeights}. 
    Suppose that $T_1$ has a proper subtree $T'$ of weight $\frac{n}{2}$. Then $w(c_1)=0$ and the vertex of $T'$ adjacent to $c_1$ would be a (third) centroid.    

    \item If $(T,w)$ has at least three centroids, we recall that $w(P)=0$ by Theorem \ref{thm:CentroidsNonNegativeWeights}. 
    In particular, all the internal vertices of $P$ have zero weight. 
    Since no pair of zero-weight vertices is adjacent, $P$ has exactly one internal vertex $v$ and $w(v)=0$. 
    Moreover, since $w(P)=0$, $v$ has no adjacent vertices except the endpoints, say $c_1, c_2$ of $P$. 
    Since $c_1$ and $c_2$ are adjacent to $v$ we have $w(c_1) \neq 0$ and $w(c_2) \neq 0$. 
    By Theorem \ref{thm:CentroidsNonNegativeWeights}, we have $\hs(c_1)=\hs(c_2)=n/2$. 
    Therefore, each one of the subtrees of $v$ has weight of exactly $n/2$, since otherwise either $\hs(c_1)$ or $\hs(c_2)$ would exceed $n/2$, contradicting Theorem \ref{thm:CentroidsNonNegativeWeights}.
\end{enumerate}
\end{proof}

We extend the notion of canonical (weighted) trees to directed weighted trees in a natural way: a directed weighted tree is canonical if its underlying weighted tree is canonical. Now, we use the following consequence of Lemma \ref{lem:iso} and Corollary  \ref{coro:CanonicalCentroids} to enumerate non-isomorphic weighted trees. 

\begin{corollary}\label{cor:iso}
Two canonical weighted trees $T$ and $T'$ are isomorphic if and only if $\dir{T}$ and $\dir{T'}$ are isomorphic and both are canonical.
\end{corollary}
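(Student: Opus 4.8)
The plan is to derive this corollary directly from Lemma~\ref{lem:iso} together with the purely definitional observation that $T$ is the underlying (undirected) weighted tree of $\dir{T}$, so that canonicity passes between $T$ and $\dir{T}$ in both directions. Recall that, by the extension of the notion just introduced, a directed weighted tree is canonical exactly when its underlying weighted tree is canonical; since $\dir{T}$ is obtained from $T$ only by designating a root and orienting the edges, the underlying weighted tree of $\dir{T}$ is precisely $T$. Hence $\dir{T}$ is canonical if and only if $T$ is, and likewise for $T'$.

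For the forward implication, I would assume that the canonical weighted trees $T$ and $T'$ are isomorphic. Lemma~\ref{lem:iso} immediately gives that $\dir{T}$ and $\dir{T'}$ are isomorphic. Moreover, $T$ is canonical by hypothesis and it is the underlying weighted tree of $\dir{T}$, so $\dir{T}$ is canonical by the definition of a canonical directed weighted tree; the same argument applied to $T'$ shows that $\dir{T'}$ is canonical. This establishes the right-hand side.

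For the converse, I would assume that $\dir{T}$ and $\dir{T'}$ are isomorphic (the additional hypothesis that both are canonical is not needed for this direction, and is in any case automatic from the standing assumption that $T$ and $T'$ are canonical). Applying the other direction of the biconditional in Lemma~\ref{lem:iso} yields that $T$ and $T'$ are isomorphic, as required.

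I do not anticipate any genuine obstacle here: the substance is entirely captured by Lemma~\ref{lem:iso}, and the only point requiring care is to invoke the definition of \emph{canonical} for directed weighted trees correctly --- namely that it is inherited from the underlying undirected weighted tree --- so that the canonicity clause on the right-hand side is recognized as equivalent to the hypothesis on $T$ and $T'$. It is also worth noting why the clause is stated at all: it is the form in which the corollary is used by the generation algorithm, which ranges over canonical centroid-rooted weighted trees and needs precisely that these are in one-to-one correspondence, up to isomorphism, with canonical free weighted trees.
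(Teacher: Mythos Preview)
Your proof is correct and matches the paper's intent: the paper does not spell out a proof but simply presents the corollary as a consequence of Lemma~\ref{lem:iso} (together with Corollary~\ref{coro:CanonicalCentroids}, though as you rightly observe, the latter is not actually needed for the logical derivation---only Lemma~\ref{lem:iso} and the definition of canonicity for directed weighted trees are used). Your remark that the canonicity clause on the right-hand side is automatic, and your explanation of why it is nonetheless stated, are both apt.
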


Therefore, the generation of canonical weighted trees is equivalent to the generation of canonical centroid-rooted trees. In the rest of this section, we provide the implementation details of such a generation.
\subsection{Generation algorithms for weighted trees}\label{subsec:AlgWeighted}
We now describe the modifications to the algorithms from Section \ref{sec:Trees} needed to generate weighted trees. 
For this purpose, we first consider positive-weighted trees and then proceed with weighted trees.

\runningtitle{Positive-weighted trees}
In order to extend the rooted tree generation algorithms to weighted trees (with no zero weights allowed), 
it is sufficient to modify the parts of the algorithms that generate vertices, i.e., the root of a tree.
Specifically, we have to modify the first functions of both Algorithm \ref{alg:RootedEnumeration} and Algorithm \ref{alg:UnrankingRootedTree} in 
such a way that, instead of connecting a root to a forest of $n-1$ vertices,
they consider each of the possible weights $w \in [n]$ for the root and connect it to a forest of weight $n-w$.

As for rooted trees, since Fact \ref{fact:centroid} holds also for positive-weighted trees, again, it is sufficient to modify Algorithm \ref{alg:EnumerateFreeTree} as follows: when generating the root of a mono-centroidal tree, 
it tries all the $w \in [n]$ and then generates a forest of weight $n-w$ with maximum subtree-weight of at most $n/2-1$. 

As our goal is to derive a single generic algorithm that can generate both unweighted and positive-weighted trees, we can generalize as follows. 
Define two parameters, say $minw$ and $maxw$
and restrict the possible weights to $w \in [minw..maxw]$ instead of $w \in [n]$.
Clearly, setting $minw=maxw=1$ corresponds to unweighted trees, and $minw=1$, $max_w=n$ corresponds to positive weighted trees.

\runningtitle{Canonical weighted trees}
Extending the above approach to weighted trees (that allow zero weights) seems, at first glance, to be as simple as setting $minw=0$.
However, zero-weight vertices introduce two problems: 
a) enabling the possibility of three centroids, and b) disabling the possibility of adjacent zero-weight vertices.
The solution of the first problem can be easily derived from Corollary \ref{coro:CanonicalCentroids}.
In this part we will address the second problem.
First, consider rooted trees. 
If a root vertex has zero weight then none of its children may have zero weight.
A possible solution is to divide vertices into two sets, say, "green" and "white" vertices
where "green" stands for "positive-weighted" and "white" stands for "zero-weight".
The children of white vertices should be all green, and the children of green vertices may be either green or white.
To generalize this approach, for each color $c$ define three parameters.
Namely, the minimum and maximum vertex weights $minw(c), maxw(c)$, and a set $chld(c)$ of possible colors for its children.
In the sequel, the color of a rooted tree is the color of its root,
and the color-set $cs(F)$ of a forest $F$ is the set of colors of its trees.
In order to generate a tree of a certain color $c$, one has to try all the possible weights $w \in [minw(c)..maxw(c)]$,
then one has to generate all the forests $F$ of weight $n-w$ with $cs(F) \subseteq cs_c$. 
Table \ref{tbl:colors1} shows the setting for weighted and unweighted trees.
The set of weighted trees is the union of green and white trees whose parameters are defined as mentioned above.
The set of unweighted trees is exactly the set of gray trees.
Indeed, the root $r$ of a gray tree is gray and thus $w(r) \in [1..1]$, i.e., $w(r)=1$. 
Moreover, every subtree of $r$ is gray. 
This inductively defines the set of unweighted trees.
\begin{table}[h]
\centering
\begin{tabular}{|l|l|l|l|}
\hline
 $c$  & $minw(c)$ &  $maxw(c)$  &   $chld(c)$   \\ 
\hline
Gray  & 1         & 1           & \{ Gray \}     \\ \cline{1-1}
Green & 1         & $\infty$    & \{White, Green\}  \\ \cline{1-1}
White & 0         & 0           & \{ Green \}    \\ \cline{1-1}
\hline
\end{tabular}
\caption{The color parameter values used to construct weighted and unweighted trees.}\label{tbl:colors1}
\end{table}

\section{Weighted Block Trees}\label{sec:BlockTrees}
In this section, we extend the previous notions and algorithms to weighted block trees. This allows us to enumerate and generate u.a.r. block graphs which form a subclass of chordal graphs where each block consists of a complete graph.
\subsection{Structural Properties}
A \emph{block} of a graph is a maximal biconnected component of it.
The \emph{block tree} (a.k.a. \emph{block-cutpoint} tree \cite{harary1971graph}, a.k.a \emph{super graph} \cite{E73}) of a connected graph $G$ 
is a tree each vertex of which is associated either with a cut vertex of $G$ or with a block of $G$.
We will refer to a vertex associated with a block as a \emph{block vertex} and name it after the block it is associated with.
With a slight abuse of terminology, we refer as a cut vertex to a vertex (of the block tree) associated with a cut vertex of $G$,
and name it after its associated cut vertex (in $G$).
A cut vertex $v$ is adjacent (in the block tree) to a block vertex $B$ if and only if $v$ is a vertex of $B$ (in $G$).
Cut vertices (resp. block vertices) are pairwise non-adjacent in a block-tree.

It follows from the above definitions that the degree of a cut vertex (in the block tree) is the number of blocks to which it belongs (in the graph).
Since a cut-vertex belongs to at least two blocks it cannot be (associated with) a leaf (of the block tree).
Stated differently, all the leaves of a block tree are associated with blocks. 
It also follows from the definitions that block vertices and cut-vertices constitute the (unique) bipartition of the block tree.
Therefore, the leaves of the block tree are in one part of the bipartition.
In other words, in any bi-coloring of the block tree, the set of leaves is monochromatic.
Equivalently, the distance between any pair of leaves is even.
In fact this is one characterization of block trees \cite{harary1971graph}.

An example of the block cut-vertex tree and their centroids is given in Figure \ref{fig:weighthedblockcuttree}. 

\begin{figure}[H]
    \centering
        \includegraphics[width=0.7\textwidth]{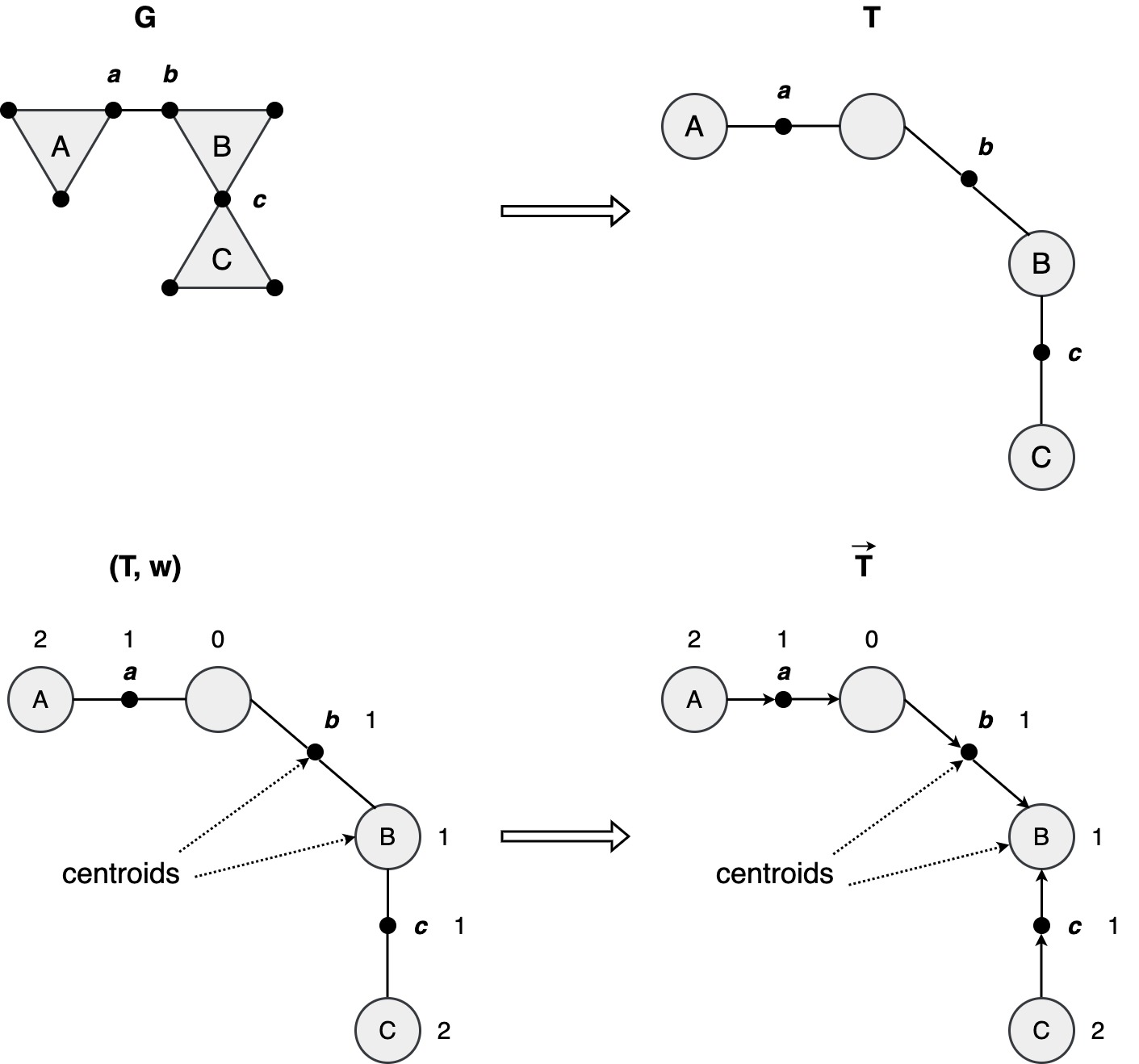}
    \caption{
    A graph $G$, its block tree $T$, its weighted block tree $(T,w)$ and its centroid-rooted tree.
    The graph $G$ depicted at the top-left has three cut-vertices, namely $a, b$ and $c$. 
    It has four blocks which are the three triangles $A, B, C$ and the $K_2$ induced by $\set{a,b}$. 
    Every cut vertex of $G$ is in exactly two blocks and every such block contains at most two cut-vertices. 
    Thus we obtain the block cut-vertex tree $T$ which is the path depicted at the top-right. 
    The block $\set{a, b}$ contains only cut-vertices, the triangle $B$ contains two cut-vertices (namely $b$ and $c$) and each of the other two triangles contains one cut-vertex each. 
    Therefore, the number of remaining vertices in these blocks are 0, 1, 2 and 2, respectively. 
    These weights are shown at the bottom-right together with the weights of the cut-vertices which are all one. 
    The sum of the weights is therefore $0+1+2+2+3\cdot 1=8$ which is equal to the number of vertices of $G$. 
    The centroids of $(T,w)$ are $b$ and $B$. 
    The removal of the edge between them separates $(T,w)$ into two subtrees of weight 4, each. 
    At the bottom-right is depicted the centroid-rooted tree in which all the edges are directed towards these centroids. As our example lacks the definition a total order $\prec$, the orientation of the edge between the two centroids is chosen arbitrarily. 
    }
    \label{fig:weighthedblockcuttree}
\end{figure}

\begin{definition}
\textbf{(Weighted block tree:)} A \emph{weighted block tree} is a weighted tree $T$ with the following weight function.
The weight of a cut vertex is 1 and the weight of a block vertex is the number of (possibly zero) vertices in the block that are not cut vertices.
\end{definition}
It follows from the definition that if $T$ is the block tree of a graph $G$, then every vertex of $G$ contributes exactly 1 to $w(T)$, i.e., $w(T)=\abs{V(G)}$.
This is because blocks intersect only at cut vertices.
We now note that a block tree is canonical: a) zero weight vertices, being blocks are pairwise non-adjacent, b) the weight of a leaf cannot be zero since a block contains at least two vertices and a leaf block contains at most one cut-vertex.

We now restate Corollary \ref{coro:CanonicalCentroids} for the special case of block trees.

\begin{corollary}\label{cor:BlockCentroids}
If $T$ is the weighted block tree of a graph $G$ on $n$ vertices exactly one of the following holds.
\begin{enumerate}
    \item $T$ has exactly one centroid $c$ and $\hs(c) < \frac{n}{2}$.
    \item $T$ has exactly two centroids $v, B$ where $v \in B$ is a cut vertex of $G$ and the removal of the edge $v B$ 
partitions $T$ into two subtrees $T_1, T_2$ of weight $\frac{n}{2}$ each.
Moreover, every proper subtree of $T_1$ (resp. $T_2$) has weight at most $\frac{n}{2}-1$.
    \item $T$ contains a zero-weight (block) vertex $B$ which is the unique central centroid of $T$ whose removal partitions $T$ into two subtrees of weight $\frac{n}{2}$ each. Therefore, the block $B$ is a $K_2$.
\end{enumerate}
\end{corollary}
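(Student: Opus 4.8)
The plan is to derive Corollary \ref{cor:BlockCentroids} as a direct specialization of Corollary \ref{coro:CanonicalCentroids}, using the fact—established immediately before the statement—that every weighted block tree is canonical. Since $T$ is canonical and $w(T)=\abs{V(G)}=n$, Corollary \ref{coro:CanonicalCentroids} already gives us the trichotomy: either one centroid with $\hs(c)<n/2$, or two adjacent centroids splitting $T$ into two halves of weight $n/2$ with every proper subtree of each half of weight at most $n/2-1$, or three centroids with a unique central centroid $c$ of weight zero whose removal splits $T$ into two halves of weight $n/2$. The work left is purely to translate the "canonical tree" language into "block tree" language, i.e., to identify what the centroids \emph{are} in each case (cut vertices vs.\ block vertices) and to extract the extra structural consequence in case 3 (that $B$ is a $K_2$).

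First I would handle Case 1: it transfers verbatim, since it makes no reference to weights of specific vertices—just report $\hs(c)<n/2$. Next, for Case 2, I would invoke the bipartition structure of the block tree: cut vertices and block vertices form the two parts, so two \emph{adjacent} centroids must consist of one cut vertex $v$ and one block vertex $B$, and adjacency in the block tree means precisely $v\in B$ in $G$. The partition claim and the "every proper subtree has weight at most $n/2-1$" claim come straight from Corollary \ref{coro:CanonicalCentroids}(2). For Case 3, Corollary \ref{coro:CanonicalCentroids}(3) tells us the unique central centroid $c$ has $w(c)=0$; in a block tree only block vertices can have zero weight (cut vertices always have weight $1$), so $c$ is a block vertex, call it $B$. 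Its removal splits $T$ into two subtrees of weight $n/2$ each.

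The one genuinely new deduction—and the only step that needs more than bookkeeping—is showing that in Case 3 the block $B$ is a $K_2$. Here I would argue as follows: $w(B)=0$ means $B$ contains no non-cut vertices of $G$, so every vertex of $B$ is a cut vertex. In the block tree, the neighbors of $B$ are exactly the cut vertices lying in $B$. From Corollary \ref{coro:CanonicalCentroids}(3) (equivalently, from the proof of Theorem \ref{thm:CentroidsNonNegativeWeights} / Corollary \ref{coro:CanonicalCentroids} that the central centroid $c$ with $w(c)=0$ is the \emph{unique} internal vertex of the centroid path and has no neighbors besides the two path-endpoints), $B$ has degree exactly $2$ in the block tree, hence $B$ contains exactly two cut vertices and no other vertices, i.e., $\abs{V(B)}=2$. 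A block on two vertices is $K_2$. I would also note (or it follows) that these two cut vertices are the two endpoints of the centroid path and each has weight $1$, consistent with the two halves having weight $n/2$.

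I expect the main obstacle to be nothing deep but rather making sure the "uniqueness/degree-two" property of the zero-weight central centroid is actually available: it is proved inside the proof of Corollary \ref{coro:CanonicalCentroids}(3) ("$P$ has exactly one internal vertex $v$ \ldots\ $v$ has no adjacent vertices except the endpoints"), so I would cite that proof (or restate the needed line) rather than Corollary \ref{coro:CanonicalCentroids} as a black box, since the corollary's \emph{statement} only records that $c$ has weight zero and splits $T$ in half, not that it has degree two. Once that is pinned down, translating "degree two block vertex with all vertices being cut vertices" into "$K_2$" is immediate, and the mutual exclusivity of the three cases is inherited from Corollary \ref{coro:CanonicalCentroids}.
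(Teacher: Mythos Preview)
Your proposal is correct and matches the paper's approach: the paper gives no separate proof of this corollary, introducing it simply as a restatement of Corollary~\ref{coro:CanonicalCentroids} for block trees after noting that weighted block trees are canonical. Your only unnecessary worry is the last paragraph: the degree-two property of the zero-weight central centroid is already contained in the \emph{statement} of Corollary~\ref{coro:CanonicalCentroids}(3), since ``whose removal partitions $T$ into two subtrees'' forces $\deg_T(c)=2$, so you need not dig into its proof.
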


Let $f: V(G) \to V(G')$ be an isomorphism.
Since $f$ preserves adjacencies, it preserves paths.
Therefore, it preserves cut-vertices and connected components.
On the other hand, $f$ preserves induced subgraphs, i.e., for any subset $U$ of $V(G)$, $G'[f(U)]$ is isomorphic to $G[U]$.
Being maximal subgraphs that do not contain cut-vertices, $f$ preserves blocks.
It also preserves the memberships of cut-vertices in blocks.
These facts imply an isomorphism from the block tree of $G$ to the block tree of $G'$.
Since the sizes of the blocks and the number of cut-vertices in a block are preserved
we get the following observation.
\begin{observation}
\label{obs:blockiso}
If two graphs $G$ and $G'$ are isomorphic then their weighted block trees are isomorphic.
\end{observation}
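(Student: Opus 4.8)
The plan is to take an isomorphism $f: V(G) \to V(G')$ and exhibit directly the induced isomorphism between the weighted block trees. First I would record the structural invariants preserved by $f$. Since $f$ preserves adjacency it preserves walks and paths, hence connectivity; in particular, for every vertex $v$ the restriction of $f$ is an isomorphism from $G - v$ onto $G' - f(v)$, so $v$ is a cut vertex of $G$ if and only if $f(v)$ is a cut vertex of $G'$. More generally, for any $U \subseteq V(G)$ the map $f$ restricts to an isomorphism $G[U] \to G'[f(U)]$, and biconnectivity is an isomorphism invariant; since a block is exactly an inclusion-maximal vertex set inducing a biconnected subgraph (with the degenerate $K_2$ blocks, and the trivial block when $n=1$, handled separately), $f$ induces a bijection $B \mapsto f(B)$ between the blocks of $G$ and those of $G'$. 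Finally, for a cut vertex $v$ and a block $B$ we have $v \in V(B)$ if and only if $f(v) \in V(f(B))$.

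Next I would define the candidate map $\hat f$ on the vertex set of the block tree $T$ of $G$: send a cut-vertex vertex $v$ to $f(v)$ and a block vertex $B$ to $f(B)$. By the invariants above, $\hat f$ is a bijection from $V(T)$ onto the vertex set of the block tree $T'$ of $G'$, carrying cut-vertex vertices to cut-vertex vertices and block vertices to block vertices. To see that $\hat f$ is a graph isomorphism $T \to T'$, recall that the only edges of a block tree join a cut-vertex vertex $v$ to a block vertex $B$, and such an edge is present exactly when $v \in V(B)$ in the underlying graph; since $v \in V(B)$ iff $f(v) \in V(f(B))$, adjacency and non-adjacency are both preserved, so $\hat f$ is an isomorphism of the underlying (unweighted) block trees.

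It remains to check that $\hat f$ preserves the weight function. A cut-vertex vertex has weight $1$ in both $(T,w)$ and $(T',w')$, and $\hat f$ maps cut-vertex vertices to cut-vertex vertices, so these weights match. For a block vertex $B$, its weight is $\abs{V(B)}$ minus the number of vertices of $B$ that are cut vertices of $G$; since $f$ restricts to a bijection $V(B) \to V(f(B))$ that carries the cut vertices of $G$ lying in $B$ exactly onto the cut vertices of $G'$ lying in $f(B)$, these two counts agree, hence $w(B) = w'(f(B))$. Therefore $\hat f$ is a weight- and adjacency-preserving bijection between $(T,w)$ and $(T',w')$, i.e., an isomorphism of weighted trees, which proves the claim.

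The main obstacle is the second step: making the assertion ``an isomorphism carries blocks to blocks'' fully rigorous. This needs the (standard, but worth stating) facts that biconnectivity is preserved under induced-subgraph isomorphism and that inclusion-maximality is then inherited, together with careful treatment of the degenerate block types ($K_2$ blocks and the trivial block in the $n=1$ case) so that $B \mapsto f(B)$ is genuinely a bijection between the block sets. Everything after that is bookkeeping with the bipartite adjacency rule of block trees and the explicit weight formula.
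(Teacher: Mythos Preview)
Your proposal is correct and follows essentially the same route as the paper: both argue that an isomorphism $f$ preserves cut vertices, induced subgraphs, and hence blocks (as maximal biconnected pieces), from which the induced bijection on block-tree vertices preserves the incidence relation and the weights. Your write-up is simply more explicit---you name the map $\hat f$, verify adjacency via the membership rule, and spell out the weight computation---whereas the paper condenses the same reasoning into a short paragraph preceding the observation.
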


\begin{definition}
Let $T$ be the block tree of a graph $G$ and $T'$ a subtree of $T$. 
We denote by $G[T']$ the subgraph of $G$ induced by the union of all blocks in $T'$. 
Let $T_1, \ldots,  T_k$ be the subtrees of a (cut or block) vertex of $T$. 
We term the subgraphs $G[T_1], \ldots , G[T_k]$ as the \emph{subgraphs} defined by this vertex.
\end{definition}

\runningtitle{Block Graphs}
The \textit{block graph} $B(G)$ of a given graph $G$ is the intersection graph of its blocks. 
A graph is a block graph if it is isomorphic to $B(G)$ for some graph $G$.
\begin{theorem}\label{thm:block}
There is a one-to-one correspondence between the class of connected block graphs on $n$ vertices and the class of weighted block-trees of weight $n$.
\end{theorem}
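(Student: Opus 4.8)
The plan is to exhibit the bijection explicitly in both directions and check that the two constructions are mutually inverse, up to isomorphism. First I would define the forward map: given a connected block graph $H$ on $n$ vertices, I claim $H$ is itself $B(G)$ for an essentially unique $G$, and I would recover $G$ together with its block structure. The key structural fact about block graphs I would invoke (or prove as a preliminary) is that in a connected block graph, every block is a clique, two distinct blocks share at most one vertex, and the ``cut-vertices'' of $B(G)$ correspond to blocks of $G$ that contain a cut-vertex of $G$, while the simplicial cliques of $B(G)$ correspond to the blocks of $G$. From this I would read off the block tree of $G$ as an abstract object: its block-vertices are the blocks of $G$, its cut-vertices are the cut-vertices of $G$, adjacency is membership, and the weights are assigned as in the definition of a weighted block tree. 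By Observation \ref{obs:blockiso}, isomorphic block graphs yield isomorphic weighted block trees, so the forward map is well-defined on isomorphism classes, and $w(T)=|V(G)|=n$ as noted after the definition of weighted block trees.

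Next I would define the reverse map: given a weighted block tree $(T,w)$ of weight $n$, reconstruct a connected graph $G$ (hence $B(G)$). For each block-vertex $B$ of $T$, let $k$ be its degree in $T$ (the number of incident cut-vertices) and $w(B)$ its weight; form a clique on $k + w(B)$ vertices, of which $k$ are designated to be identified with the $k$ neighbouring cut-vertices and $w(B)$ are private. Then glue all these cliques along the shared cut-vertices as dictated by the incidences of $T$. Because cut-vertices are pairwise non-adjacent in $T$ and block-vertices are pairwise non-adjacent, this gluing is unambiguous, and connectivity of $T$ gives connectivity of $G$; one checks $|V(G)| = \sum_B w(B) + \#\{\text{cut vertices}\} = w(T) = n$. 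I would then verify that each glued clique is indeed a block of $G$ (maximality follows because distinct cliques meet in at most one vertex, so no larger biconnected subgraph exists) and that the cut-vertices of $G$ are exactly the designated shared vertices; hence the block tree of $G$ is $(T,w)$ and $B(G)$ is a well-defined block graph.

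Then I would argue the two maps are mutual inverses up to isomorphism. Starting from $H$ a block graph, passing to $G$ and its weighted block tree, and then reconstructing from that tree, should return $B(G)\cong H$; this is a matter of checking that the clique-gluing recipe applied to the block tree of $G$ reproduces the intersection graph of the blocks of $G$, which is immediate once one notes that two blocks of $G$ intersect iff they share a cut-vertex iff the corresponding block-vertices of $T$ have a common neighbour. Conversely, starting from $(T,w)$, building $G$, and taking the block tree of $G$ returns $(T,w)$ by the block-identification argument of the previous paragraph. Care must be taken with degenerate cases: a block of weight $0$ is a $K_2$ both of whose vertices are cut-vertices (case 3 of Corollary \ref{cor:BlockCentroids}), and one must confirm the recipe handles $w(B)=0$, $k=2$ correctly, as well as the base cases $n=1$ (a single vertex, whose block tree is a single block-vertex of weight $1$) and a single block.

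The main obstacle, I expect, is the injectivity direction phrased correctly: one must show that non-isomorphic weighted block trees give non-isomorphic block graphs, equivalently that $G$ is determined up to isomorphism by its weighted block tree. The subtlety is that the abstract block tree with weights records the sizes of all blocks and the incidence pattern of cut-vertices, but one has to be sure this is genuinely enough data to pin down $G$ (and then $B(G)$) up to isomorphism — i.e. that there is no additional ``wiring'' freedom inside a block beyond choosing which of its vertices are the cut-vertices, and that choice is forced by the tree. Since a block is a clique, its internal structure is rigid, so the only freedom is the partition of its vertex set into shared versus private vertices, and that partition is exactly encoded by the degree and weight of the block-vertex; making this rigidity argument precise, together with ruling out that two different gluing patterns could be isomorphic, is the heart of the proof.
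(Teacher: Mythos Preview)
Your reverse construction (gluing cliques of size $w(B)+d_T(B)$ along the designated cut-vertices) and your injectivity argument (blocks are cliques, hence internally rigid, so the weighted block tree pins down the graph) are essentially what the paper does and are correct.

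The forward direction, however, takes a wrong turn. Given the block graph $H$ on $n$ vertices, you propose to first recover a graph $G$ with $H=B(G)$ and then take the weighted block tree of that $G$. This fails on two counts. First, such a $G$ is far from essentially unique: for every $m\geq 2$ and every $k\geq 3$ we have $B(K_m)=B(C_k)=K_1$, so the preimage is not determined even up to isomorphism, and your claim that ``$H$ is itself $B(G)$ for an essentially unique $G$'' is simply false. Second, and fatally for the bijection, the weighted block tree of $G$ has total weight $\abs{V(G)}$, not $\abs{V(H)}=n$; your line ``$w(T)=\abs{V(G)}=n$'' conflates the two graphs. The map you describe does not land in weighted block trees of weight $n$, and Observation~\ref{obs:blockiso} cannot rescue well-definedness because it concerns the block tree of a graph, not of an (ill-defined) preimage.

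The fix is precisely the structural fact you already quote but do not use: a graph is a block graph if and only if each of its blocks is a clique. With that characterization, the forward map is simply $H\mapsto$ the weighted block tree of $H$ itself --- no preimage is needed. This is what the paper does: it sets $\bcw(G)$ to be the weighted block tree of the block graph $G$ and proves bijectivity by (a) building, for each $(T,w)$, a block graph with weighted block tree $(T,w)$ via exactly your clique-gluing, and (b) showing that two block graphs with the same weighted block tree are isomorphic by writing down an explicit isomorphism that sends cut-vertex to cut-vertex and maps the remaining (non-cut) vertices within each block arbitrarily. Once you drop the detour through the $B(\cdot)$ preimage, your outline coincides with the paper's proof.
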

\begin{proof}
Let $\bcw(G)$ be the weighted block tree of a connected block graph $G$.
We have to show that this mapping is a bijection, i.e.,
(a) for every weighted block tree $(T,w)$, there exists a block graph $G$ on $w(T)$ vertices such that $\bcw(G)=(T,w)$ and
(b) if $(T,w)=\bcw(G)=\bcw(G')$ for two block graphs $G$ and $G'$ then these graphs are isomorphic.

It is well-known that a graph is a block graph if and only if its blocks are cliques \cite{Harary63}.

(a) Let $(T,w)$ be a weighted block tree. 
We construct a block graph $G$ as follows.
For every vertex $B$ of $T$ that must correspond to a block (i.e., one that is at even distance from the leaves) construct a clique on $w(B)+d_T(B)$ vertices.
For every other vertex $v$ of $T$ (that must correspond to a cut-vertex) choose a vertex (that is yet not chosen) from each one of the $d_T(v)$ blocks adjacent to $v$ in $T$ and identify them.
Note that the number of vertices created initially is $\sum_B (w(B)+d_T(B)) = \sum_B w(B) + \sum_B d_T(B) = \sum_B w(B) + \sum_v d_T(v)$ where the last equality holds because $(B,v)$ is a bipartition of $T$.
At every identification operation $d_T(v)$ vertices are replaced by a single vertex, i.e., the number of vertices drops by $d_T(v)-1$.
Therefore, the number of vertices of $G$ is $\sum_B w(B) + \sum_v d_T(v) - \sum_v (d_T(v) - 1) = \sum_B w(B) + \sum_v 1 = w(T)$.

(b) Let $B_1, \ldots, B_k$ be the blocks of $G$, and let $B'_1, \ldots, B'_k$ be the blocks of $G'$ corresponding to the same block vertices in $T$ in the same order.
Similarly, let $v_1, \ldots, v_\ell$ be the cut-vertices of $G$, and let $v'_1, \ldots, v'_\ell$ be the cut-vertices of $G'$ corresponding to the same vertices in $T$ in the same order.
The blocks $B_i$ and $B'_i$ have the same number of vertices (i.e., $w(B_i)+d_T(B_i)$), for every $i \in [k]$.
Therefore, $B_i$ and $B'_i$ are cliques of the same size.
Let $f:V(G) \to V(G')$ such that $f(v_j)=v'_j$ for every $j \in [\ell]$,
and for every $i \in [k]$, the $w(B_i)=w(B'_i)$ non-cut-vertices of $B_i$ are mapped to the non-cut-vertices of $B'_i$ arbitrarily. 
The mapping $f'$ is clearly a bijection.
We note that $N_G(v) = \bigcup_{v \ni B_i} B_i$, i.e., the union of the blocks that contain $v$.
For a non-cut-vertex, this neighborhood is clearly preserved by $f'$.
Let $v_j$ be a cut-vertex of $G$ that is contained in blocks $B_{j_1}, \ldots$.
Then $v_j$ is adjacent to $B_{j_1}, \ldots$ in $T$.
Since $v'_j$ is mapped to the same vertex of $T$, it is contained in blocks $B'_{j_1}, \ldots$.
Therefore, the neighbourhood of $v_i$ is preserved by $f'$.
We conclude that $f'$ is an isomorphism.
\end{proof}

\subsection{Generation algorithms for weighted block trees}
We now describe the modifications to the algorithms in the previous sections needed to generate weighted block trees, thus connected block graphs by Theorem \ref{thm:block}. 
We will call the resulting generation algorithm {\bgg}.

A \emph{rooted block tree} is a block tree with an arbitrarily chosen root.
Color the vertices of a rooted block tree with two colors "yellow" and "red" 
where cut vertices are yellow and block vertices are red.

The structural properties of the block trees can be stated as follows.
\begin{itemize}
\item The children of a yellow vertex are all red. 
\item The children of a red vertex are all yellow.
\item The weight of a yellow vertex is 1.
\item The weight of a red vertex is unrestricted. However,
\item The weight of a red vertex that is a leaf is at least 1.
\item A yellow vertex cannot be a leaf.
\end{itemize}

The first four conditions above can be expressed by the framework described in Section \ref{subsec:AlgWeighted},
In order to express the last two conditions we first restate them as follows:
\begin{itemize}
\item The weight of a red tree is at least 1.
\item The weight of a yellow tree is at least 2.
\end{itemize}
Now it is sufficient to add a new parameter $mintw(c)$ which is the minimum weight of a subtree whose root is colored $c$.
We now note that a simple modification of the algorithms is sufficient to support this parameter.
Namely, when considering the weight of $m$ of the largest subtree we do not allow values below $mintw(c)$.
Table \ref{tbl:colors2} extends Table \ref{tbl:colors1} with the new parameter $minwt$ and colors introduced in this section.

\begin{table}[h]
\centering
\begin{tabular}{|l|l|l|l|l|}
\hline
 $c$  & $minw(c)$ & $maxw(c)$ & $mintw(c)$ & $chld(c)$   \\ 
\hline
Gray  & 1         & 1         &  1         & \{ Gray \}   \\ \cline{1-1}
Green & 1         & $\infty$  &  1         & \{ White, Green\}   \\ \cline{1-1}
White & 0         & 0         &  1         & \{ Green \}   \\ \cline{1-1}
Yellow & 1        & 1         &  2         & \{ Red \}     \\ \cline{1-1}
Red    & 0        & $\infty$  &  1         & \{ Yellow \}  \\ \cline{1-1}
\hline
\end{tabular}
\caption{The color parameter values used to construct weighted, unweighted and block trees.}\label{tbl:colors2}
\end{table}

\section{A General Framework and Implementation}\label{sec:framework}
\runningtitle{Framework}
In this section we formally describe the framework developed in the previous sections.
For simplicity, we confine ourselves to the case where $\abs{chld(c)}=1$ that is sufficient to generate unweighted, positive-weighted and weighted block trees.
We extend the notation given in Section \ref{sec:RootedTrees} by 
a) replacing the number of vertices $n$ by the weight $w$ of a tree, and 
b) introducing the color parameter $c$.
For instance, $\RT_{w,c}$ is the set of all trees of total weight $w$ colored $c$.
The correctness of the following equations that extend equations \eqref{eqn:rec_rt} - \eqref{eqn:rec_f_nmmu} can be easily verified.

\begin{eqnarray}
rt^\leq_{w,c,m}  & = & 
\left\{
\begin{array}{cc}
0     &  \textrm{if~} w < mintw(c)\\
\sum_{r=minw(c)}^{\min(maxw(c),w)} f^\leq_{w-r,chld(c),\min(m,w-r)}     & \textrm{otherwise}
\end{array}
\right.
\label{eqn:rec11}\\
rt_{w,c}  & = &  rt^\leq_{w,c,w} \label{eqn:rec12}\\
f_{w,c,m,\mu} & = & \CC{rt_{m,c}}{\mu} \cdot f^\leq_{w-\mu \cdot m, c, \min \set{w-\mu \cdot m, m-1}} \label{eqn:rec13}\\
f^\leq_{w,c,m,\mu} & = & \sum_{\mu'=1}^\mu f_{w,c,m,\mu'} \label{eqn:rec14}\\
f_{w,c,m} & = & f^\leq_{w,c,m,\lfloor \frac{w}{n} \rfloor}\label{eqn:rec15}\\
f^\leq_{w,c,m} & = & 
\left\{ \begin{array}{ll}
1    & \textrm{if~} w=0  \\
\sum_{m'=mintw(c)}^{m} f^\leq_{w,c,m',\lfloor w / m' \rfloor}   & \textrm{otherwise}
\end{array}
\right. \label{eqn:rec16}\\
f_{w,c} & = & f^\leq_{w,c,w} \label{eqn:rec17}
\end{eqnarray}

It can be easily verified that these values can be computed in $\bigoh(N^2 \log N)$ time.
Moreover, using the technique used for unweighted trees one can show that the output delay time complexity of the enumeration algorithm is linear.
Finally, the complexity analysis of unranking can be computed along the same lines of Theorem \ref{thm:UnrankingComplexity}.

\runningtitle{Implementation}
The algorithms were implemented in Python.
The source code of the library can be found in \cite{PythonImpl}.
In addition, a Google Colaboratory (or Colab, for short) demonstrating the usage of the library is published in \cite{Colab} for convenience.
Figure \ref{fig:CAT} shows the running times taken from a sample run of the Colab. 
The chart shows the average (amortized) construction time of a rooted tree of weight $n$ when all the trees of weight $n$ are constructed. 
We enumerated all (12,826,228) rooted trees on 20 vertices using a standard (free) Colab with 12GB of memory.
It can be seen that except a few irregularities when the number of trees is small, 
that are apparently caused by system overhead in various levels, the construction time per tree remains constant.
This is an experimental evidence that our algorithms are CAT.

\begin{figure}[H]
    \centering
    \includegraphics[width=0.5\textwidth]{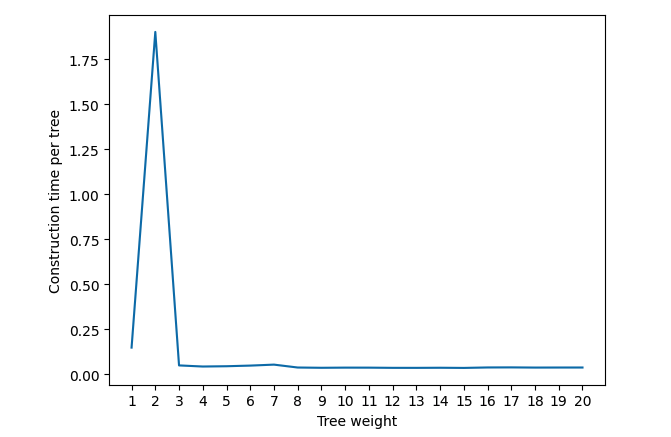}
    \caption{Amortized enumeration time in milliseconds.}
    \label{fig:CAT}
\end{figure}

See Figure \ref{fig:WeightedTreesFive} in the Appendix for a list of positive-weighted trees of weight 5, and 
Figure \ref{fig:BlockTreesSix} in the Appendix for a list of all weighted block trees of graphs on 6 vertices categorized by their types. 
See also Table \ref{tab:NumberOfBlockTrees} in the Appendix for the numbers of (non-isomorphic) block trees of total weight $n$ for $n \in [30]$, or equivalently, the numbers of (non-isomorphic) connected block graphs with $n$ vertices for $n\in [30]$. 
These graphs up to 19 vertices are enumerated using {\bgg} and can be downloaded from \cite{HoGBlock}. 
Block graphs with more vertices can be generated using our open source codes in \cite{PythonImpl} and \cite{Colab}, which can also be easily adapted for a generation u.a.r..

\section{Conclusion}\label{sec:conclusion}
We presented a generic framework (as formalized in Section \ref{sec:framework}) to solve enumeration problems of structured graphs using some tree structure such as tree decomposition, clique-width decomposition, block-trees, etc. 
We demonstrate this approach for block graphs using their block trees that uniquely identifies them. In the sequel we mention possible future works.

We strongly believe that our enumeration algorithms are CAT. Namely, they have Constant Amortized Time. This is due to the fact that the order of the generated trees is similar to the order of trees in previous algorithms that are proven to be CAT.
Specifically, most of the generated trees are obtained by a few changes to the preceding tree in the order. In this work, we provided experimental evidence of this. 
    
We have shown that the output delay time of our enumeration algorithms is linear. Whereas, as said above, most of the trees are obtained by a few changes to the preceding tree, this does not hold for all the trees. Some trees are very different than their preceding tree. Therefore, they cannot be obtained from the previous tree in the order, in constant time. 

Enumerating weighted/block trees in some other order that will allow for sub-linear delay time and the rigorous analysis of the amortized enumeration time are left as open questions. 

\runningtitle{Future works}

    U.a.r. generation:
    As mentioned in the introduction, one can generate objects uniformly at random using a counting and an unranking algorithm.
    However, this approach is not time-efficient since at each step of the generation of the object, one has to compute the exact path to get an object of the given rank.
    A more time-efficient approach would be to make random decisions at each step
    which can be done by simple computations on our coefficients.

    Improving the time complexity of unranking:
    In Lemma \ref{lem:multisetComplexity}, we have shown that the $i$-th multiset of $k$ elements over a set of $n$ elements can be found in $\bigoh(k^2 \log n)$ time. 
    In the proof, we used a subroutine that performs a binary search on an array of exponential size each element of which is a binomial coefficient that can be computed in $\bigoh(k)$ time.
    This can be possibly improved by performing a "fuzzy" binary search using an oracle that computes approximations of binary coefficients, faster.

    Improving the complexity analysis of unranking:
    The analysis in Theorem \ref{thm:UnrankingComplexity} might not be tight. 
    It might be possible to show that the time-complexity of unranking is $\bigoh(n^2 \log n)$.

\runningtitle{Extensions}
We note that our framework can be applied to other types of tree-based decompositions of graphs whenever there exists a bijection between the graphs and their decompositions.

In some cases such a bijection does not exist as in the case of cactus graphs that are exactly the class of graphs whose blocks are either cycles or $K_2$ s.
In this case, one weighted block tree may correspond to multiple cactus graphs.
As opposed to block graphs, this requires a more complex algorithm since, unlike the cliques in which every bijection is an isomorphism, cycles induce a smaller symmetry group. 

In other cases where the blocks do not exhibit symmetry properties that can be exploited, one may resort to isomorphism checks of the individual blocks. This approach can be promising especially when the blocks have bounded size. 

Our approach can be promising in generating several subclasses of chordal graphs, such as chordal graphs of bounded tree-width and $k$-separator chordal graphs.
In these cases, one can first generate a weighted clique-tree and then generate the corresponding chordal graphs and perform isomorphism checks of the subgraphs.

\bibliographystyle{abbrv}
\bibliography{GraphTheory,References}

\begin{thebibliography}{10}

\bibitem{HoGBlock}
{The House of Graphs - Block Graphs}.
\newblock https://houseofgraphs.org/meta-directory/block-graphs.

\bibitem{BielakP12}
H.~Bielak and M.~Panczyk.
\newblock A self-stabilizing algorithm for finding weighted centroid in trees.
\newblock {\em Ann. {UMCS} Informatica}, 12(2):27--37, 2012.

\bibitem{bodlaender_et_al:DR:2014:4544}
H.~L. Bodlaender, P.~Heggernes, and D.~Lokshtanov.
\newblock {Graph Modification Problems (Dagstuhl Seminar 14071)}.
\newblock {\em Dagstuhl Reports}, 4(2):38--59, 2014.

\bibitem{HoG}
K.~Coolsaet, S.~D'hondt, and J.~Goedgebeur.
\newblock House of graphs 2.0: A database of interesting graphs and more.
\newblock {\em Discrete Applied Mathematics}, 325:97--107, 2023.
\newblock Available at https://houseofgraphs.org.

\bibitem{D12}
R.~Diestel.
\newblock {\em Graph Theory, 4th Edition}, volume 173 of {\em Graduate texts in
  mathematics}.
\newblock Springer, 2012.

\bibitem{E73}
S.~Even.
\newblock {\em Graph Algorithms}.
\newblock Computer Science Press, 1979.

\bibitem{Golumbic:2004:AGT:984029}
M.~C. Golumbic.
\newblock {\em Algorithmic Graph Theory and Perfect Graphs (Annals of Discrete
  Mathematics, Vol 57)}.
\newblock North-Holland Publishing Co., Amsterdam, The Netherlands, 2004.

\bibitem{iso20}
M.~Grohe and P.~Schweitzer.
\newblock The graph isomorphism problem.
\newblock {\em ACM}, 63:128--134, 2020.

\bibitem{Harary63}
F.~Harary.
\newblock A characterization of block-graphs.
\newblock {\em Canadian Mathematical Bulletin}, 6(1):1–6, 1963.

\bibitem{harary1971graph}
F.~Harary.
\newblock {\em Graph Theory}.
\newblock Addison Wesley series in mathematics. Addison-Wesley, 1971.

\bibitem{HararyPalmer73}
F.~Harary and E.~M. Palmer.
\newblock {\em Graphical enumeration}.
\newblock Academic Press, 1973.

\bibitem{kelsen}
P.~Kelsen.
\newblock Ranking and unranking trees using regular reductions.
\newblock In C.~Puech and R.~Reischuk, editors, {\em STACS 96}, pages 581--592,
  Berlin, Heidelberg, 1996. Springer Berlin Heidelberg.

\bibitem{kerber1990cataloging}
A.~Kerber, R.~Laue, R.~Hager, and W.~Weber.
\newblock Cataloging graphs by generating them uniformly at random.
\newblock {\em Journal of Graph Theory}, 14(5):559--563, 1990.

\bibitem{Knuth68}
D.~E. Knuth.
\newblock {\em The Art of Computer Programming, Volume {I:} Fundamental
  Algorithms}.
\newblock Addison-Wesley, 1968.

\bibitem{L96}
G.~Li.
\newblock Generation of rooted trees and free trees.
\newblock Master's thesis, University of Victoria, 1996.

\bibitem{PythonImpl}
Enumeration and unranking of trees with weight restrictions in {P}ython.
\newblock https://github.com/cmshalom/Enumerations/tree/master/code, 2023.

\bibitem{Colab}
Enumeration and unranking of trees with weight restrictions in {P}ython.
\newblock
  https://colab.research.google.com/drive/1-k6lj7WArJQKJkqtqIpCBvbZoBeFMdeB?usp=sharing.

\bibitem{UEHARA2005479}
R.~Uehara, S.~Toda, and T.~Nagoya.
\newblock Graph isomorphism completeness for chordal bipartite graphs and
  strongly chordal graphs.
\newblock {\em Discrete Applied Mathematics}, 145(3):479--482, 2005.

\bibitem{Wilf81}
H.~S. Wilf.
\newblock The uniform selection of free trees.
\newblock {\em J. Algorithms}, 2(2):204--207, 1981.

\bibitem{Wilf78}
H.~S. Wilf and A.~Nijenhuis.
\newblock {\em Combinatorial Algorithms for Computers and Calculators}.
\newblock Academic Press, 1978.

\bibitem{yirik2021chemical}
M.~A. Yirik and C.~Steinbeck.
\newblock Chemical graph generators.
\newblock {\em PLOS Computational Biology}, 17(1):e1008504, 2021.

\end{thebibliography}

\newpage
\section*{Appendix}
\appendix
\begin{figure}[h]
    \centering
    \begin{subfigure}{\textwidth}
        \centering
        \includegraphics[width=0.22\textwidth]{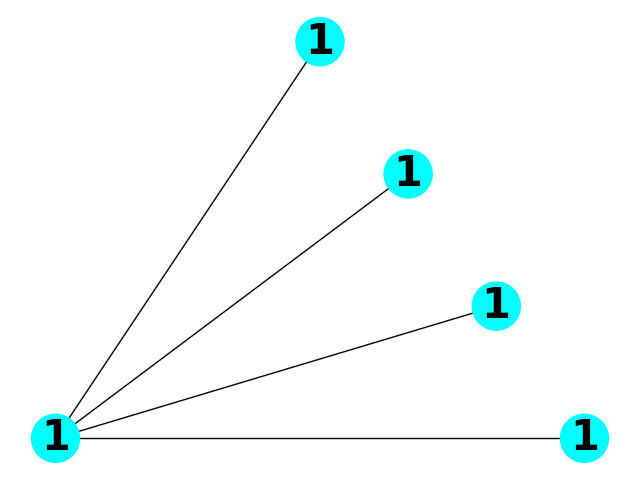}\qquad        \includegraphics[width=0.22\textwidth]{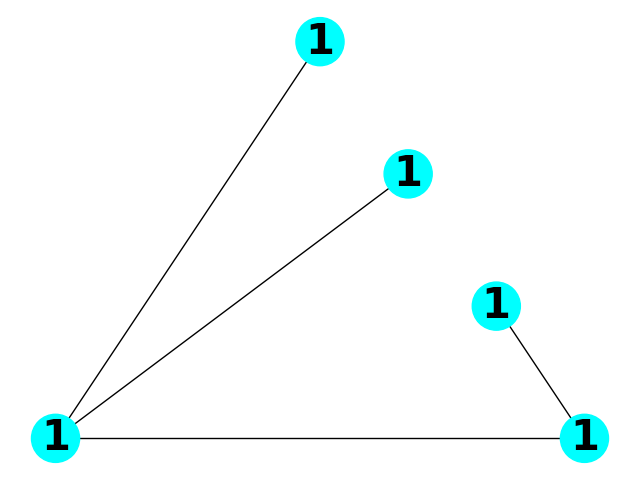}\qquad
        \includegraphics[width=0.22\textwidth]{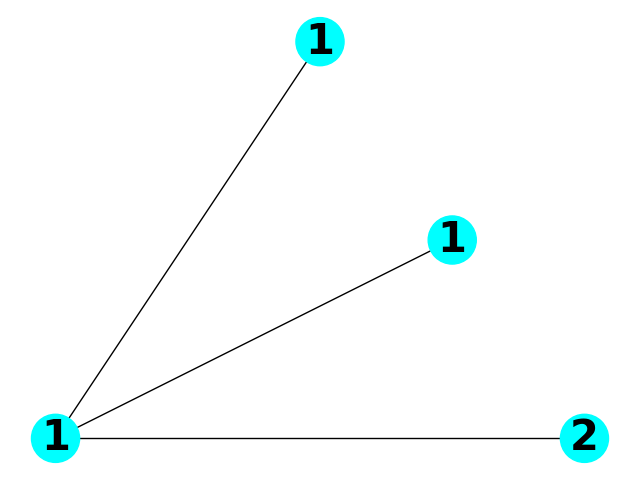}\qquad
        \includegraphics[width=0.22\textwidth]{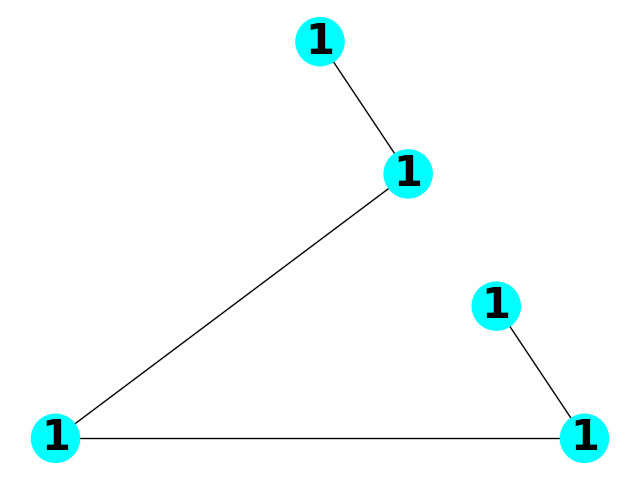}\qquad
        \includegraphics[width=0.22\textwidth]{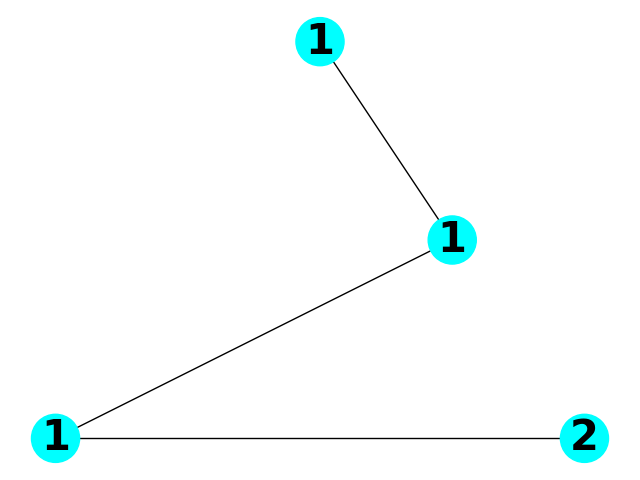}\qquad
        \includegraphics[width=0.22\textwidth]{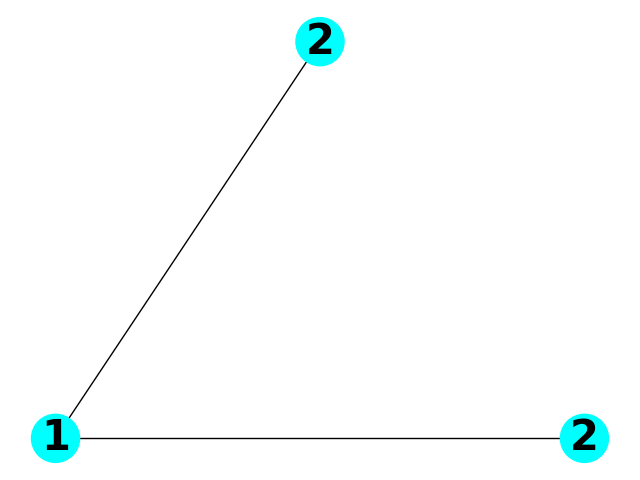}\qquad
        \includegraphics[width=0.22\textwidth]{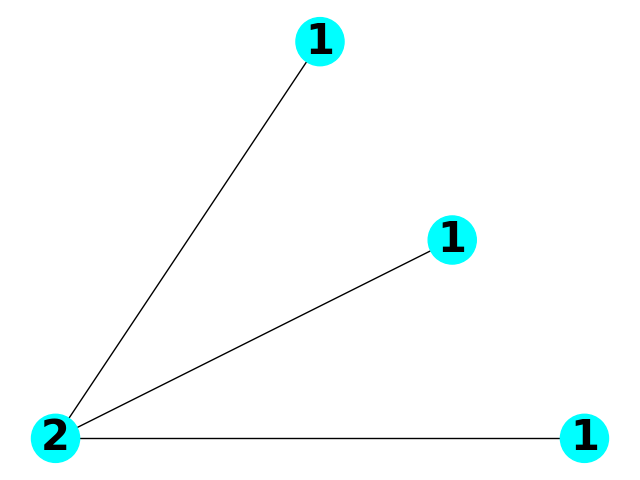}\qquad
        \includegraphics[width=0.22\textwidth]{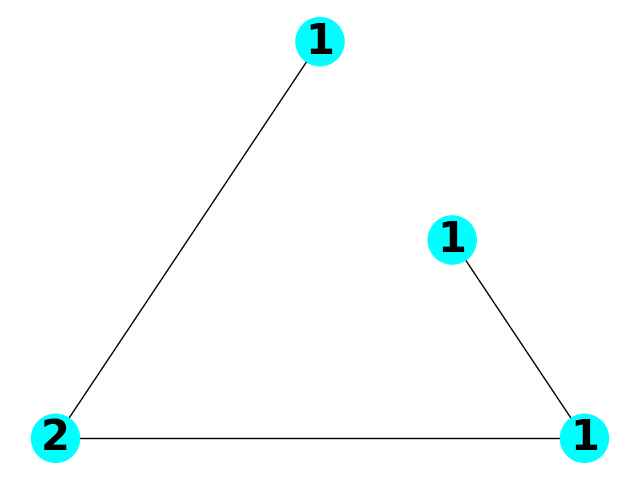}\qquad
        \includegraphics[width=0.22\textwidth]{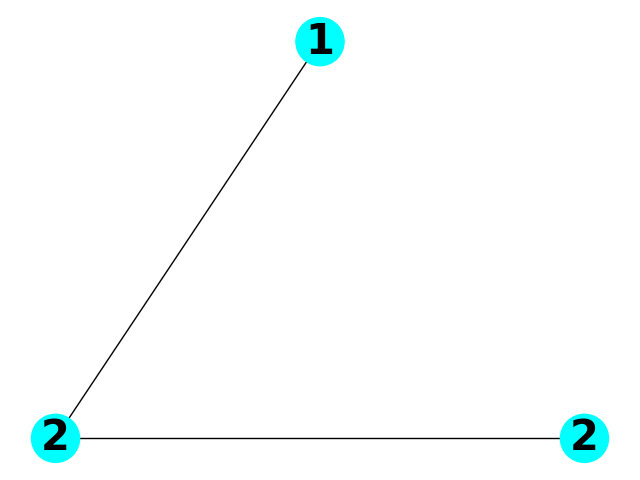}\qquad
        \includegraphics[width=0.22\textwidth]{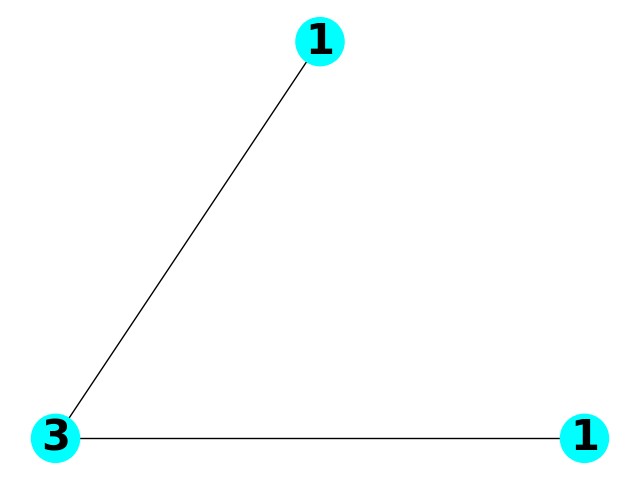}\qquad
        \includegraphics[width=0.22\textwidth]{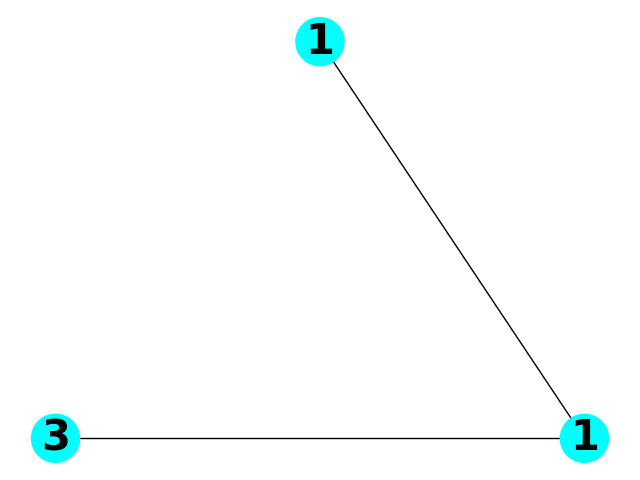}\qquad
        \includegraphics[width=0.22\textwidth]{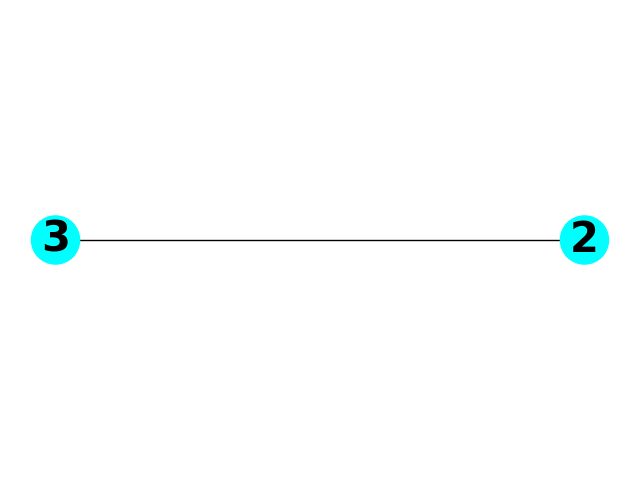}\qquad
        \includegraphics[width=0.22\textwidth]{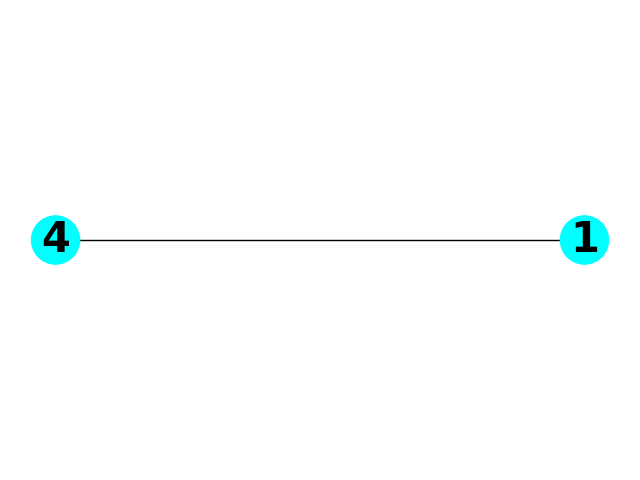}\qquad
        \includegraphics[width=0.22\textwidth]{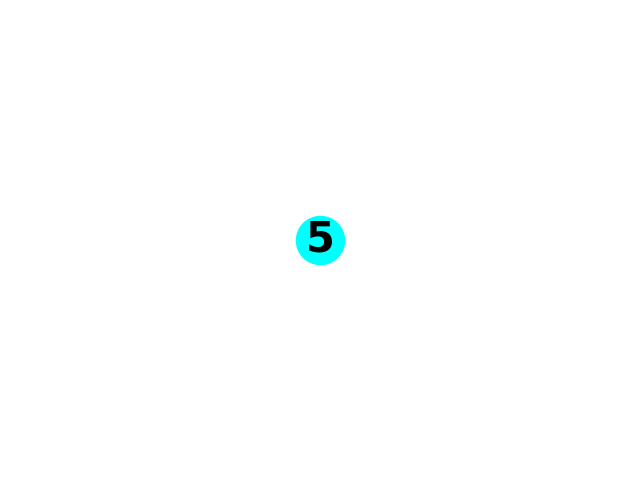}
    \end{subfigure}

    \caption{Positive-weighted free trees of weight 5}
    \label{fig:WeightedTreesFive}
\end{figure}

\begin{figure}[h]
    \centering

    \begin{subfigure}{\textwidth}
        \centering
        \includegraphics[width=0.21\textwidth,height=0.11\textheight]{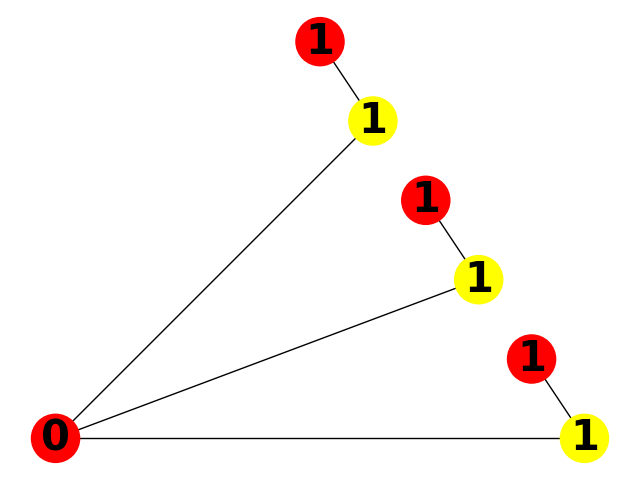}\qquad
        \includegraphics[width=0.21\textwidth,height=0.11\textheight]{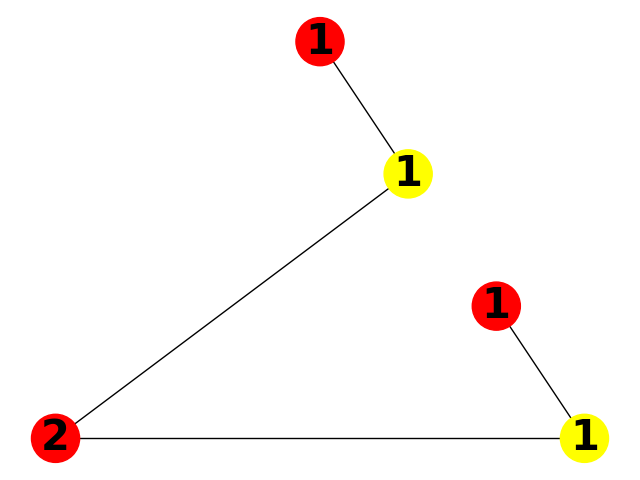}\qquad
        \includegraphics[width=0.21\textwidth,height=0.11\textheight]{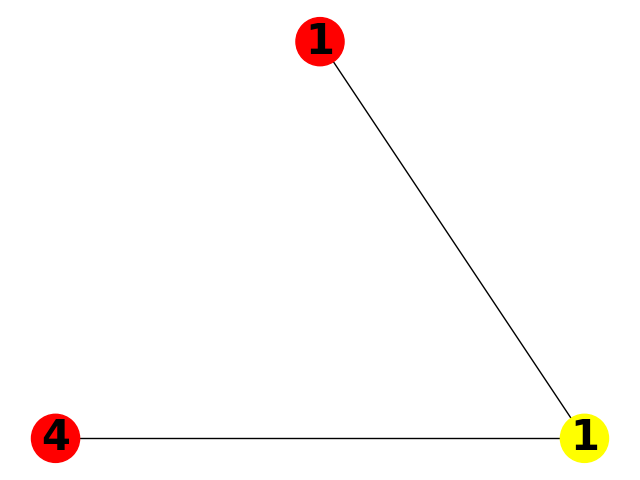}\qquad
        \includegraphics[width=0.21\textwidth,height=0.11\textheight]{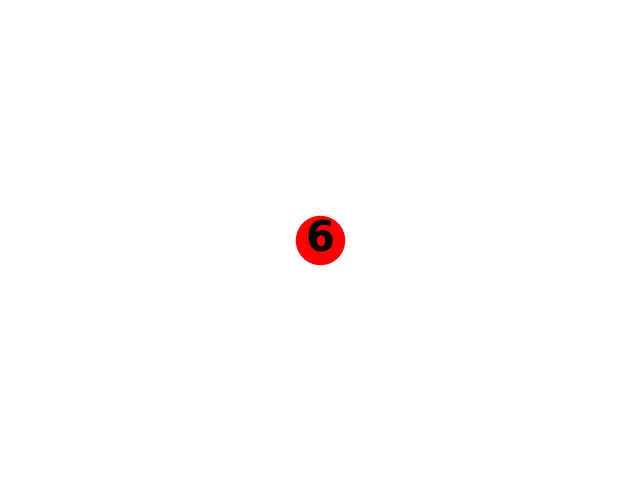}
        \caption{Block Trees with one red centroid}
        \label{fig:BlockTreesWithRedCentroid}
    \end{subfigure}

    \begin{subfigure}{\textwidth}
        \centering
        \includegraphics[width=0.25\textwidth,height=0.11\textheight]{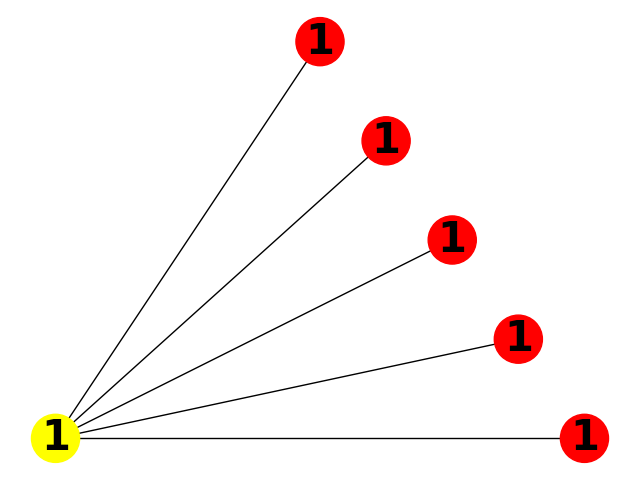}\qquad
        \includegraphics[width=0.25\textwidth,height=0.11\textheight]{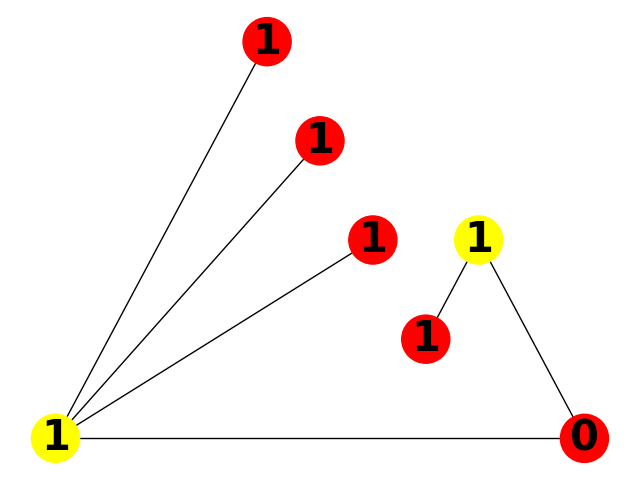}\qquad
        \includegraphics[width=0.25\textwidth,height=0.11\textheight]{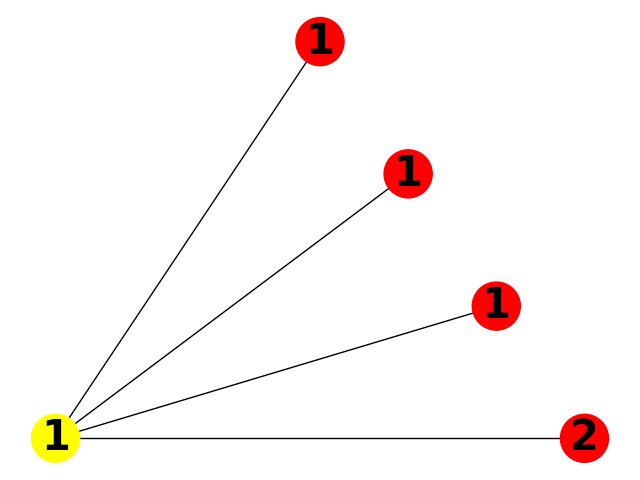}
        \includegraphics[width=0.25\textwidth,height=0.11\textheight]{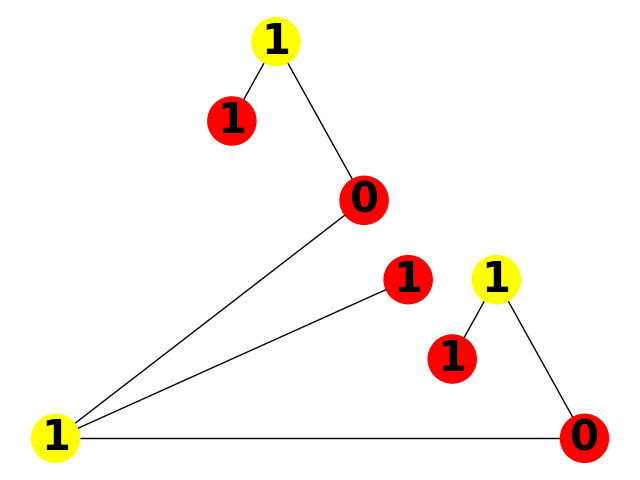}\qquad
        \includegraphics[width=0.25\textwidth,height=0.11\textheight]{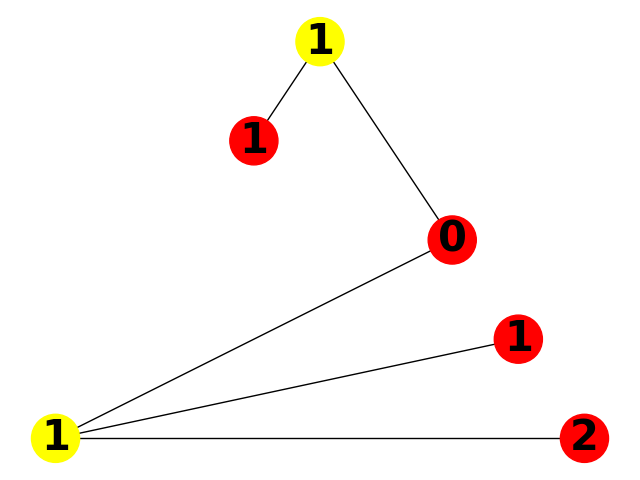}\qquad
        \includegraphics[width=0.25\textwidth,height=0.11\textheight]{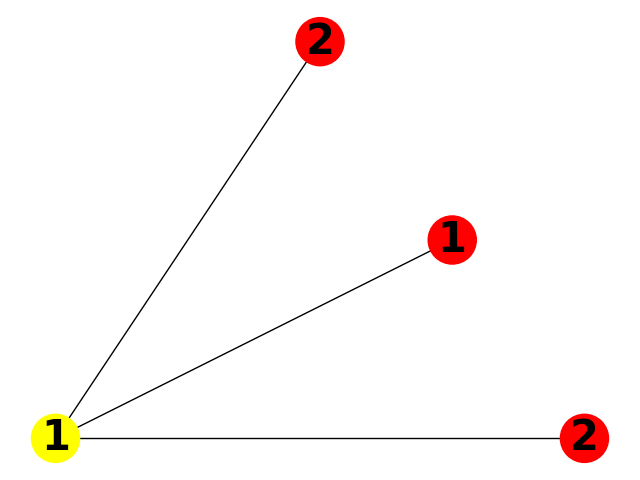}
        \caption{Block Trees with one yellow centroid}
        \label{fig:BlockTreesWithYellowCentroid}
    \end{subfigure}

    \begin{subfigure}{\textwidth}
        \centering
        \includegraphics[width=0.25\textwidth,height=0.11\textheight]{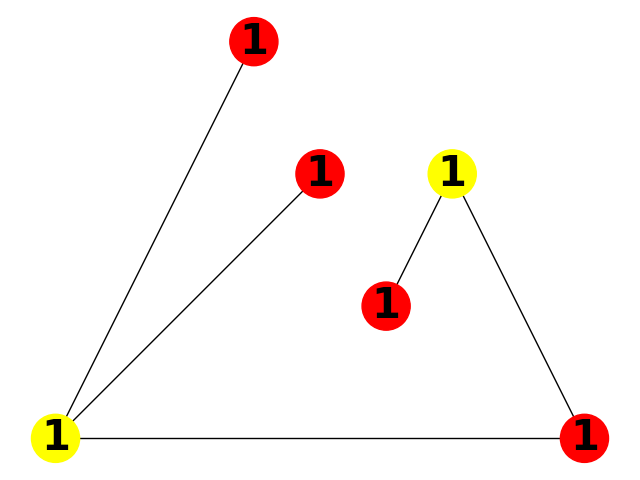}\qquad
        \includegraphics[width=0.25\textwidth,height=0.11\textheight]{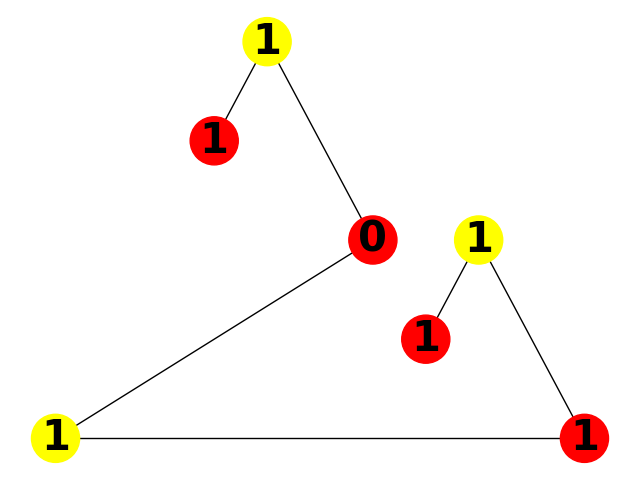}\qquad
        \includegraphics[width=0.25\textwidth,height=0.11\textheight]{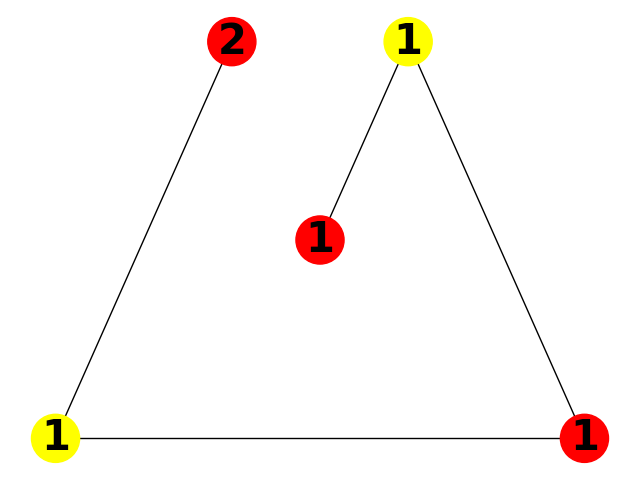}
        \includegraphics[width=0.25\textwidth,height=0.11\textheight]{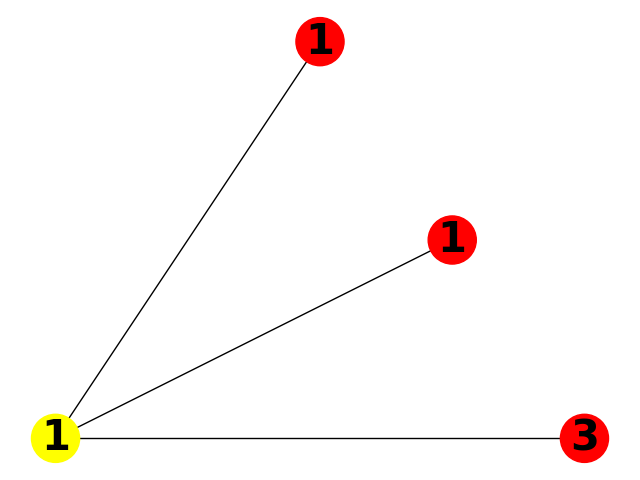}\qquad
        \includegraphics[width=0.25\textwidth,height=0.11\textheight]{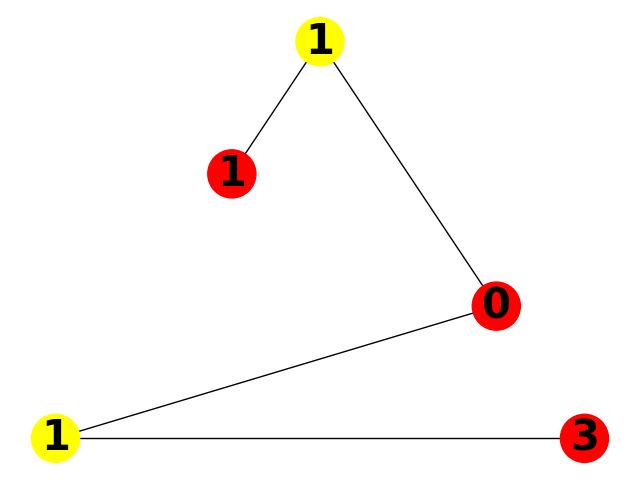}\qquad
        \includegraphics[width=0.25\textwidth,height=0.11\textheight]{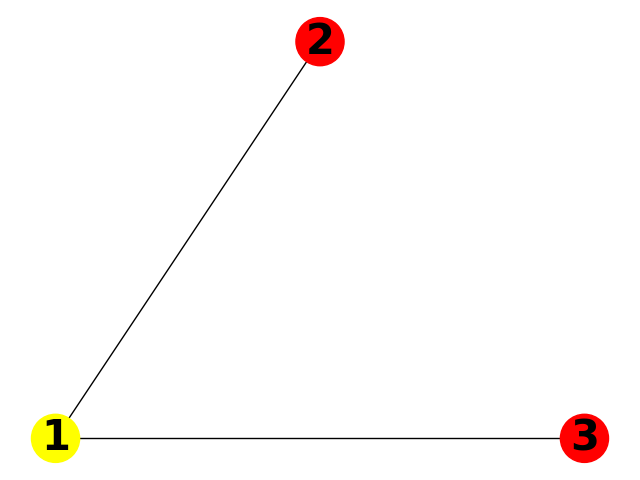}
        \caption{Block Trees with two centroids}
        \label{fig:BlockTreesWithTwoCentroids}
    \end{subfigure}

    \begin{subfigure}{\textwidth}
        \centering
        \includegraphics[width=0.25\textwidth,height=0.11\textheight]{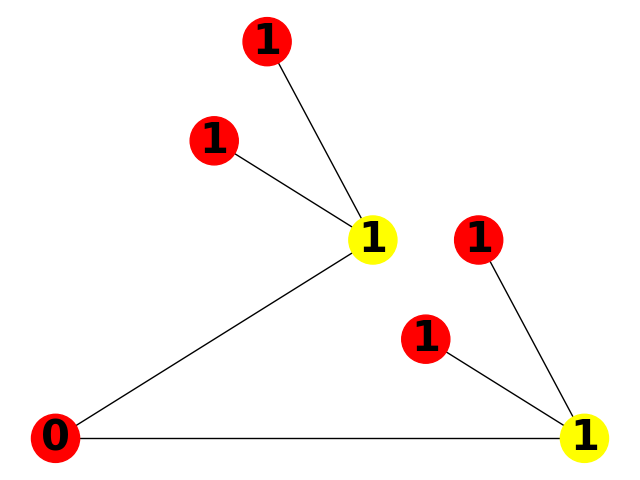}\qquad
        \includegraphics[width=0.25\textwidth,height=0.11\textheight]{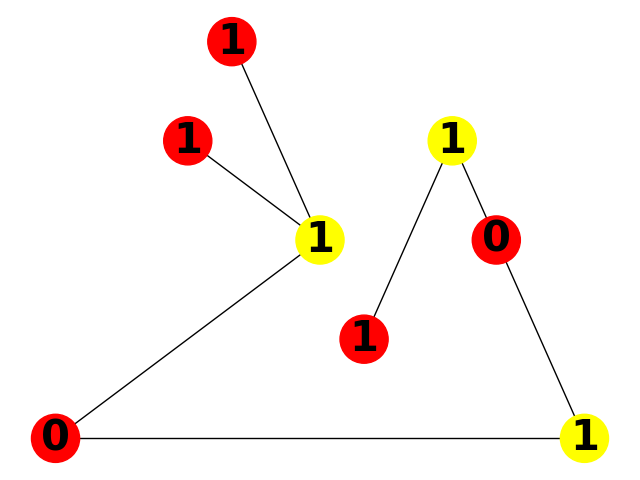}\qquad
        \includegraphics[width=0.25\textwidth,height=0.11\textheight]{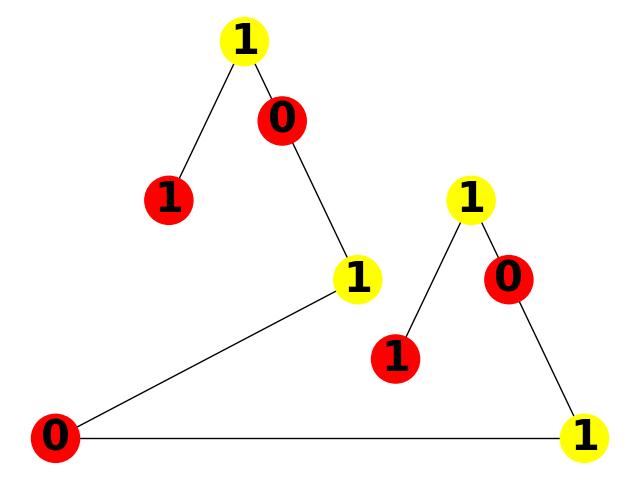}
        \includegraphics[width=0.25\textwidth,height=0.11\textheight]{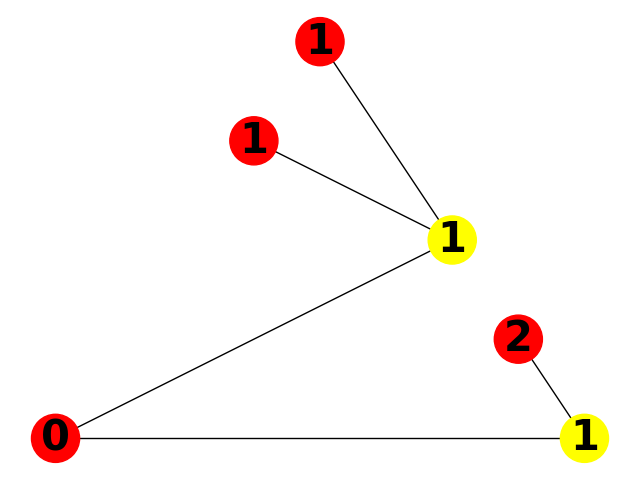}\qquad
        \includegraphics[width=0.25\textwidth,height=0.11\textheight]{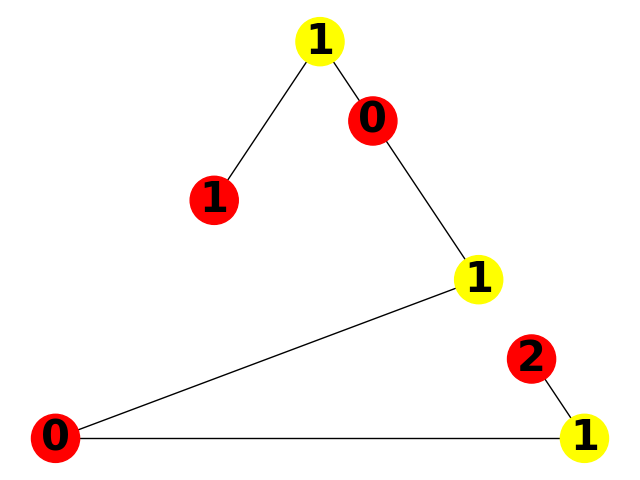}\qquad
        \includegraphics[width=0.25\textwidth,height=0.11\textheight]{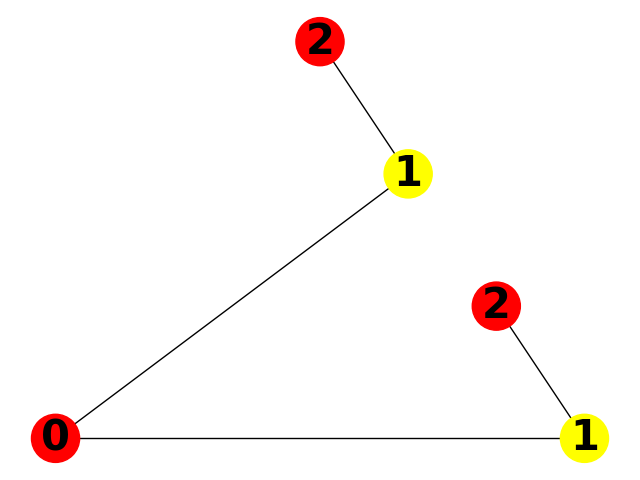}
        \caption{Block Trees with three centroids}
        \label{fig:BlockTreesWithThreeCentroids}
    \end{subfigure}

    \caption{Block Trees of Graphs on 6 vertices}
    \label{fig:BlockTreesSix}
\end{figure}

\begin{table}[h]
    \centering
    \begin{tabular}{rr}
     $n$ & Block Trees of Weight $n$\\
     \hline
     1 &                    1\\
     2 &                    1\\
     3 &                    2\\
     4 &                    4\\
     5 &                    9\\
     6 &                   22\\
     7 &                   59\\
     8 &                  165\\
     9 &                  496\\
    10 &                1,540\\
    11 &                4,960\\
    12 &               16,390\\
    13 &               55,408\\
    14 &              190,572\\
    15 &              665,699\\
    16 &            2,354,932\\
    17 &            8,424,025\\
    18 &           30,424,768\\
    19 &          110,823,984\\
    20 &          406,734,060\\
    21 &        1,502,876,903\\
    22 &        5,586,976,572\\
    23 &       20,884,546,416\\
    24 &       78,460,794,158\\
    25 &      296,124,542,120\\
    26 &    1,122,346,648,913\\
    27 &    4,270,387,848,473\\
    28 &   16,306,781,486,064\\
    29 &   62,476,518,448,854\\
    30 &  240,110,929,120,323\\
    \end{tabular}
    \caption{Number of connected block graphs with up to 30 vertices.}
    \label{tab:NumberOfBlockTrees}
\end{table}

\end{document}